\author{Matthieu Marbac and Mohammed Sedki}
\title{A  Family of Blockwise One-Factor Distributions for Modelling High-Dimensional Binary Data}
\newcommand{\bX}{\boldsymbol{X}}
\newcommand{\bx}{\boldsymbol{x}}
\newcommand{\tx}{\textbf{x}}
\newcommand{\tu}{\textbf{u}}
\newcommand{\ttt}{\textbf{t}}
\newcommand{\bth}{\boldsymbol{\theta}}
\newcommand{\bdelta}{\boldsymbol{\delta}}
\newcommand{\balpha}{\boldsymbol{\alpha}}
\newcommand{\bvarepsilon}{\boldsymbol{\varepsilon}}
\newcommand{\bTh}{\boldsymbol{\Theta}}
\newcommand{\B}{\textsc{b}}
\newcommand{\bomega}{\boldsymbol{\omega}}
\newcommand{\bomegastar}{\boldsymbol{\omega}^{\star}}
\newcommand{\bOmega}{\boldsymbol{\Omega}}
\newcommand{\BIC}{\mathbf{BIC}}
\newcommand{\argmax}{\mbox{arg\,max}}
\newtheorem{theorem}{Theorem}[section]
\newtheorem{proposition}[theorem]{Proposition}
\newtheorem{corollary}[theorem]{Corollary}
\newenvironment{proof}[1][Proof]{\begin{trivlist}
\item[\hskip \labelsep {\bfseries #1}]}{\end{trivlist}}
\newcommand{\qed}{\nobreak \ifvmode \relax \else
      \ifdim\lastskip<1.5em \hskip-\lastskip
      \hskip1.5em plus0em minus0.5em \fi \nobreak
      \vrule height0.75em width0.5em depth0.25em\fi}
\begin{document}
\maketitle

\begin{abstract}
We introduce a new family of one factor distributions for high-dimensional binary data.
The model provides an explicit probability for each event, thus avoiding the numeric approximations often made by existing methods.
Model interpretation is easy since each variable is described by two continuous parameters (corresponding to its marginal probability and to its strength of dependency with the other variables) and by one binary parameter (defining if the dependencies are positive or negative). 
An extension of this new model is proposed by assuming that the variables are split into independent blocks which follow the new 
one factor distribution.
Parameter estimation is performed by the inference margin procedure where the second step is achieved by an expectation-maximization algorithm.
Model selection is carried out by a deterministic approach which strongly reduces the number of competing models. 
This approach uses a hierarchical ascendant classification of the variables based on the empirical version of Cramer's V for selecting a narrow subset of models.
The consistency of such procedure is shown.
The new model is evaluated on numerical experiments and on a real data set.
The procedure is implemented in the R package MvBinary available on CRAN.
\end{abstract}

\textbf{Keywords:} Binary data, EM algorithm, High-dimensional data, IFM procedure, Model selection, One-factor copulas.

\section{Introduction}
Binary data are increasingly emerging in various research fields, particularly in economics, psychometrics or in life sciences \citep{cox1989analysis,collett2002modelling}. 
To carry out statistical inference, it is important to have at hand flexible distributions for such data.
However, there is a shortage of multivariate distributions for binary data \citep{genest2007primer}. Indeed, many approaches have been developed by considering that the binary variables are responses of several explanatory variables \citep{glonek1995multivariate, nikoloulopoulos2008multivariate, genest2013predicting}. However, these models cannot manage data composed with only binary variables.

Since binary variables are easily accessible and poorly discriminative, the binary data sets are often composed of many variables. Thus, the modelling of high-dimensional binary data is an important issue. Moreover, classical models suffer from the \emph{curse of dimensionality} since they involve too many parameters \citep{bellman1957dynamic}. Therefore, specific distributions should be introduced to manage such data.

Many authors have been interested in defining the properties of a multivariate distribution which permit an easy interpretation and inference \citep{nikoloulopoulos2009finite,panagiotelis2012pair}. Thus, \citet{nikoloulopoulos2013copula} lists the five following features that define a distribution with good properties: 
(F1) Wide range of dependence, allowing both positive and negative dependence;
(F2) Flexible dependence, meaning that the number of bivariate marginals is (approximately) equal to the number of dependence parameters;
(F3) Computationally feasible cumulative distribution function for likelihood estimation;
(F4) Closure property under marginalization, meaning that lower-order marginals belong to the same parametric family;
(F5) No joint constraints for the dependence parameters, meaning that the use of covariate functions for the dependence parameters is straightforward.

The modelling by \emph{dependency trees} \citep{chow1968approximating} is a pioneer approach for assessing the distribution of binary variables. A strength of this method is the easy maximization of likelihood function by the Kruskal algorithm \citep{Kru56}, which estimates the tree of minimal length.  Although the tree structure leads to benefits (estimation, visualisation and interpretation), it is limited to simple dependency relations. Moreover, it does not provide parameters for measuring the strength of the dependencies between two variables.

A naïve approach for modelling binary variables is to use a product of Bernoulli distributions. 
However, in spite of the parsimony induced by the independence assumption, this approach leads to severe biases when variables are dependent. 
Thus, a \emph{mixture model} with conditional independence assumption can capture the main dependencies \citep{goodman1974exploratory}.
\citet{celeux1991clustering} propose different parsimonious models to deal with high-dimensional data. 
However, this mixture-based method suffers primarily from a lack of interpretation of dependencies. 
Indeed, there is no parameters for directly reflecting the strength of the dependencies between variables. 

The \emph{quadratic exponential binary distribution} \citep{cox1972analysis} is considered as the binary version of the multivariate Gaussian distribution. However, this model does not retain its exact form under marginalization, but closure under marginalization can be achieved approximately \citep{cox1994note}. This model is not really suitable for high-dimensional data since it implies a quadratic number of parameters.

The modelling of spatial binary data can be achieved by latent Gaussian Markov Random Fields \citep{pettitt2002conditional,weir2000binary} or lattice-based Markov Random Fields like the Ising model \citep{gaetan2010spatial}. 
These approaches can deal with high-dimensional data since they have the Markov properties.
However, their using for non-spatial data is not really doable except when the model at hand is known.
Indeed, this approach requires to define the neighbourhood of each site.
The combinatorial issue of model selection also prevents the use of such approaches when data are non-spatial.

The general approach to model multivariate distributions is to use copulas.
Indeed, \emph{copulas} \citep{nelsen2006,joe1997multivariate} can be used to build a multivariate model by defining, on the one hand, the one-dimensional marginal distributions, and, on the other, the dependency structure. 
Among the copulas, the \emph{Gaussian} and the \emph{Student} ones are very popular since they model the pairwise dependencies. 
However, their likelihood has not a closed form when the variables are discrete. 
It can be approached by numerical integrations which is not doable for high-dimensional data. 
Moreover, they require a quadratic number of parameters which leads to the curse of dimensionality for high-dimensional data. 
Alternatively the \emph{Archimedean copulas} are relevant to reduce the number of parameters since they use a single parameter to model the dependencies between all the variables. Thus, this parameter characterizes a general dependency over the whole variables but it also limits the interpretation. 
For instance, positive and negative dependencies cannot be modelled simultaneously.
Moreover, the evaluation of the likelihood requires the evaluation of an exponential number of terms, so it is not doable for high-dimensional data.
Finally, \emph{vine copulas} \citep{kurowicka2011dependence} are a powerful alternative since they allow the specification of a joint distribution 
on $d$ variables with given margins by specifying
$\binom{d}{2}$  bivariate copula and conditional copula.
Note that the vine copulas generalize and increase the flexibility with respect to the dependencies trees.

The \emph{one factor copulas} \citep{knott1999latent} enable to reduce the number of parameters and thus to deal with high-dimensional data.
This approach assumes that the dependencies between the observed variables are explained by a continuous latent variable.
This approach can be used for modelling continuous data variables \citep{krupskii2013factor}, extreme-value continuous data \citep{mazo2015factor}
or ordinal data \citep{nikoloulopoulos2013factor}.
 
In this work, we introduce a new family of one factor distributions that can be written as a specific one factor copula.
For modelling more complex dependency structures, we extend this family by allowing a partition of the set of observed variables into independent blocks, where each block follows the new one factor distribution. The resulting family respects the five features listed by  \cite{nikoloulopoulos2013copula}.
According to this specific distribution, each variable is described by three parameters: 
a continuous parameter indicating its marginal probability,
 a continuous parameter indicating the strength of the dependency with the rest of variables of the block (through the latent variable) and a discrete parameter indicating if the dependency is positive or negative. 
 
Since the proposed distribution is a specific copula for discrete data, 
parameter inference is achieved by a two step procedure named Inference Function for Margin (IFM, see \citet{joe1997multivariate,joe2005asymptotic}). 
Model selection consists in finding the best partition of the variables into blocks according to the Bayesian Information Criterion (BIC; \citet{schwarz1978bic, neath2012bic}). 
Although this information criterion is defined with the maximum likelihood estimates, an extension has been proposed with the parameter estimates resulting from the IFM \citep{gao2010composite}. 
For high-dimensional data, an exhaustive approach computing the BIC for each possible model is not doable. 
Therefore, we propose a deterministic two step procedure for the model selection. 
First, a small subset of models is extracted from the whole competing models by a deterministic procedure based on a Hierarchical Ascendant Classification (HAC) of the variables by using their empirical Cramer's V. Second, the BIC is computed for the models belonging to this subset and the model maximizing this criteria is returned. We show that this approach is asymptotically consistent.
 Indeed, Metropolis-Hastings algorithm \citep{robert2004monte} is used for detecting the model maximizing the BIC criterion. Alternatively, a Metropolis-Hastings algorithm \citep{robert2004monte} can also be used for 
detecting the model maximizing the BIC criterion. However, we numerically show that the deterministic procedure obtains similar results, in a strongly reduced computing time, as the stochastic one. Therefore, we advise to use the deterministic procedure.

The paper is organised as follows. Section~\ref{sec2} introduces the new family of the specific one factor distributions per independent blocks. Section~\ref{sec3} presents the parameter inference with the IFM procedure. The model selection issue is detailed in Section~\ref{sec4}. Section~\ref{sec5}  numerically compares both model selection procedures.   Section~\ref{sec6} illustrates the approach on a real data set. Section~\ref{sec7} concludes this work. All the mathematical proofs are in appendix. The R package MvBinary implements the proposed method and contains the real data set. It is available on CRAN and the url \url{http://mvbinary.r-forge.r-project.org/} proposes a tutorial for reproducing the application described in Section~\ref{sec6}.

\section{Multivariate distribution of binary variables}
\label{sec2}
\subsection{Blocks of independent variables}
The aim is to model the distribution of the $d$-variate binary vector $\bX=(X_1,\ldots,X_d)$.
Variables are grouped into $\B$ independent blocks for modelling different kinds of dependencies. Thus, the vector $\bomega=(\omega_{1},\ldots,\omega_{d})$ determines the block of each variables since $\omega_j=b$ indicates that $X_j$ is assigned to block $b$ with $1\leq b \leq \B$. Therefore, independence between blocks implies
\begin{equation}
\forall 1\leq j \leq j'\leq d : \omega_j\neq \omega_{j'} \implies \; X_j \perp  X_{j'}.
\end{equation}

Vector $\bomega$ defines a model which is unknown and which has to be infered from the data. Variables affiliated to block $b$ are mutually dependent and are denoted by $\bX_{\{b\}}=(X_j: \omega_j=b)$. Obviously, this approach allows to model dependencies between all the variables (\emph{i.e.} $\B=1$ then $\omega_j=1$ for all $1\leq j \leq d$) or independence between all the variables (\emph{i.e.} $\B=d$ then $\omega_j=j$ for all $1\leq j \leq d$).
The probability mass function (pmf) of the realisation $\bx=(x_1,\ldots,x_d)$ is 
\begin{equation}
\label{bloclik}
p(\bx|\bomega,\bth)=\prod_{b=1}^{\B} p(\bx_{\{b\}} |\bth_b),
\end{equation}
where $\bth=(\bth_b;b=1,\ldots,\B)$ groups the model parameters, where $\bth_b$ groups the parameters of the variables of block $b$. Finally, $p(.|\bth_b)$ is the pmf of variables affiliated to block $b$ and each block is assumed to follow the one-factor distribution described in the following.

\subsection{One-factor distribution per blocks}
\subsubsection{Conditional block distribution}
In block $b$, dependencies between variables are characterised through a random continuous variable $U_b$ which follows a uniform distribution on $[0,1]$. More precisely, variables of block $b$ are independent conditionally on $U_b$. So, the pmf of variables affiliated to block $b$ is 
\begin{equation}
p(\bx_{\{b\}}|u_b,\bth_b)=\prod_{j \in \Omega_b} p(x_j|u_b,\bth_j),  \label{eq:indptcond}
\end{equation}
where $\bth_b=(\bth_j;j\in\bOmega_b)$, $\bth_j$ denotes the parameters related to variable $X_j$ detailed below, and where $\Omega_b=\{j: \omega_j=b\}$ is the set of the indices of the variables of block $b$. 
Therefore, the specific conditional distribution of $\bx_{\{b\}}$ is assumed to be a product of Bernoulli distributions whose parameters are defined according to the value of $u_b$. Indeed, for $j\in\Omega_b$
\begin{equation}
p(x_j|u_b,\bth_j) = p_j^{x_{j}} (1-p_j)^{1-x_{j}}\text{ with } p_j= (1-\varepsilon_{j}) \alpha_{j} + \varepsilon_{j} \mathds{1}_{\{u_b<\alpha_{j}\}}^{\delta_j}\mathds{1}_{\{u_b>1-\alpha_{j}\}}^{1-\delta_j}, \label{eq:conditional}
\end{equation}
where $\bth_j=(\alpha_j,\varepsilon_j,\delta_j)$ groups the parameters related to variable $X_j$ where:
\begin{itemize}
\item the continuous parameter $\alpha_j\in (0,1)$ corresponds to the marginal probability that $X_j=1$ since one can easily verify that for $j\in\bOmega_b$, $\int_0^1 p(X_j=1|u_b,\bth_j) du_b=\alpha_j$,
\item the continuous parameter $\varepsilon_j\in (0,1)$ reflects the dependency strength between $X_j$ and the other variables of block $j$ since the stronger the $\varepsilon_j$, the more correlated are the variables of the block (see Proposition~\ref{prop:dependency}),
\item the binary parameter $\delta_j\in\{0,1\}$ indicates the nature of the dependency, since $\delta_j=1$ if the observed variable is positively dependent with the latent variable and $\delta_j=0$ otherwise. Thus, two variables $X_j$ and $X_{j'}$ affiliated to the same block (\emph{i.e.} $\omega_j=\omega_{j'}$) are positively correlated if $\delta_j=\delta_{j'}$ and they are negatively correlated if $\delta_j=1-\delta_{j'}$.
\end{itemize}
Note that the model identifiability is discussed is the next section.

The parametrization of \eqref{eq:conditional} is convenient for the model interpretation. However, we introduce the following new parametrization which simplifies the likelihood computation. Conditionally on $u_{\omega_j}$, $x_j$ follows a Bernoulli distribution whose the parameters are only determined by a relation between $u_{\omega_j}$ and real $\beta_j=\alpha_j^{\delta_j}(1-\alpha_j)^{1-\delta_j}$ which corresponds to the marginal probability that $X_j=\delta_j$. Indeed, for $u_{\omega_j}\in[0,\beta_j)$, the conditional distribution $X_j|u_{\omega_j},\bth_j$ is a Bernoulli distribution $\mathcal{B}(\lambda_j)$ where $\lambda_j=(1-\varepsilon_j)\alpha_j + \varepsilon_j\delta_j$. Moreover, for $u_{\omega_j}\in[\beta_j,1]$, the conditional distribution $X_j|u_b,\bth_j$ is a Bernoulli $\mathcal{B}(\nu_j)$ where $\nu_j=(1-\varepsilon_j)\alpha_j + \varepsilon_j(1-\delta_j)$. Thus, \eqref{eq:conditional} can be summarized as follows
\begin{equation}
p(x_j|u_b,\bth_j) = \left\{ \begin{array}{rl}
\lambda_j^{x_j}(1-\lambda_j)^{1-x_j} & \text{ if } 0\leq u_b <\beta_j \\
\nu_j^{x_j}(1-\nu_j)^{1-x_j} & \text{ if } \beta_j\leq u_b <1 \\
\end{array} \right. . \label{eq:newconditional}
\end{equation}

\subsubsection{Marginal block distribution}
Obviously, the realizations $u_b$ are not observed, but the distribution of the observed variables $\bX_{\{b\}}$ results from the marginal distribution of the pair $(\bX_{\{b\}},U_b)$. So, the pmf of $\bx_{\{b\}}$  is defined by
\begin{equation}
\label{pdf}
p(\bx_{\{b\}}|\bth_b)=\int_{0}^1 p(\bx_{\{b\}}|u_b,\bth_b) du_b.
\end{equation}
We now describe the properties of the block distribution. All proofs are given in Appendix~\ref{app:propositions}.
For respecting the feature (F3) of \citet{nikoloulopoulos2013copula} and for dealing with high-dimensional data, the block distribution needs to have a closed form. This explicit pmf is detailed in the following proposition.

\begin{proposition}[Explicit distribution] \label{prop:likelihood}
Let $\sigma_b$ be the permutation of $\Omega_b$ such that for $1\leq j < j' \leq d_b$ the following inequality holds $\beta_{(b,j)}\leq\beta_{(b,j')}$, where $\beta_{(b,j)}:=\beta_{\sigma_b(j)}$ and where $d_b=\text{card}(\Omega_b)$ is the number of variables assigned to block $b$.
The integral defined by \eqref{pdf} has the following closed form
\begin{equation} 
p(\bx_{\{b\}} |\bth_b)=\sum_{j=0}^{d_b} (\beta_{(b,j+1)} - \beta_{(b,j)}) f_{b}(j;\bth_b), \label{eq:distribution}
\end{equation}
 where we define $\beta_{(b,0)}=0$ and $\beta_{(b,d_b+1)}=1$. Finally function $f_b(.)$ is defined by
\begin{equation}
f_{b}(j_0;\bth_b)=
\prod_{j=1}^{j_0} \nu_{(b,j)}^{x_{(b,j)}}(1-\nu_{(b,j)})^{1-x_{(b,j)}}
\prod_{j=j_0+1}^{d_b} \lambda_{(b,j)}^{x_{(b,j)}}(1-\lambda_{(b,j)})^{1-x_{(b,j)}} , \label{eq:fbj}
\end{equation}
where  $x_{(b,j)}:=x_{\sigma_b(j)}$ denotes the $j$-th variable (according to permutation $\sigma_b$) assigned to block $b$, $\lambda_{(b,j)}:=\lambda_{\sigma_b(j)}$, $\nu_{(b,j)}:=\nu_{\sigma_b(j)}$ and where $\prod_{j=j_0+1}^{j_0}$ is one.
\end{proposition}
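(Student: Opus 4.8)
The plan is to exploit the fact that, for a fixed realisation $\bx_{\{b\}}$, the integrand $u_b\mapsto p(\bx_{\{b\}}|u_b,\bth_b)=\prod_{j\in\Omega_b} p(x_j|u_b,\bth_j)$ is a step function of $u_b$ on $[0,1]$. Consequently the integral \eqref{pdf} collapses into a finite sum of its constant values, each weighted by the length of the interval on which that value is attained.

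First I would record, from \eqref{eq:newconditional}, that each factor $p(x_j|u_b,\bth_j)$ takes only two values as $u_b$ sweeps $[0,1]$: it equals $\lambda_j^{x_j}(1-\lambda_j)^{1-x_j}$ on $[0,\beta_j)$ and $\nu_j^{x_j}(1-\nu_j)^{1-x_j}$ on $[\beta_j,1)$, its single jump being located at $u_b=\beta_j$. Hence the product can change value only at one of the thresholds $\{\beta_j:j\in\Omega_b\}$. After relabelling the variables by the permutation $\sigma_b$ that orders these thresholds increasingly, $\beta_{(b,1)}\le\cdots\le\beta_{(b,d_b)}$, and adopting the conventions $\beta_{(b,0)}=0$ and $\beta_{(b,d_b+1)}=1$, the domain $[0,1)$ splits into the $d_b+1$ consecutive subintervals $[\beta_{(b,j_0)},\beta_{(b,j_0+1)})$ for $j_0=0,\ldots,d_b$.

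The key step is to identify the value of the integrand on each subinterval. Fixing $j_0$ and taking $u_b\in[\beta_{(b,j_0)},\beta_{(b,j_0+1)})$, any index $j\le j_0$ satisfies $\beta_{(b,j)}\le\beta_{(b,j_0)}\le u_b$, placing that factor in its $\nu$-regime, whereas any index $j>j_0$ satisfies $u_b<\beta_{(b,j_0+1)}\le\beta_{(b,j)}$, placing that factor in its $\lambda$-regime. Collecting the factors shows that the product equals exactly $f_b(j_0;\bth_b)$ as defined in \eqref{eq:fbj}. Since the integrand is then constant and equal to $f_b(j_0;\bth_b)$ on an interval of length $\beta_{(b,j_0+1)}-\beta_{(b,j_0)}$, integrating and summing over $j_0$ yields precisely \eqref{eq:distribution}.

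I do not anticipate any genuine analytic obstacle here, as the argument is simply the integration of an explicit step function; the only points demanding care are bookkeeping. One must verify that the regime split at $j_0$ correctly separates the $\nu$- and $\lambda$-factors, and that possible ties among the $\beta_j$ cause no difficulty: equal thresholds merely make the corresponding subinterval degenerate, so its vanishing weight $\beta_{(b,j_0+1)}-\beta_{(b,j_0)}=0$ silently drops the term from the sum without altering the result. The main conceptual ingredient is therefore recognising the step-function structure induced by the reparametrisation \eqref{eq:newconditional}; everything else is routine.
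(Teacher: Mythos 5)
Your proof is correct and follows essentially the same approach as the paper's: decompose $[0,1]$ at the ordered thresholds $\beta_{(b,j)}$, observe that the conditional pmf is piecewise constant and equal to $f_b(j_0;\bth_b)$ on each subinterval, and integrate. Your additional remark about ties among the $\beta_j$ producing degenerate intervals is a sensible bookkeeping detail the paper leaves implicit.
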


The strength of the proposed model is its easy interpretation. The parameter interpretation is allowed by the property of identifiability now presented.
\begin{proposition}[Model identifiability] \label{prop:ident}
The distribution defined by \eqref{eq:distribution} is identifiable under the following constraints:
\begin{itemize}
\item $\delta_{(b,1)}=1$ if $d_b>2$, $\delta_{(b,1)}=1$; 
\item $\varepsilon_{(b,1)}=\varepsilon_{(b,2)}$ if $d_b=2$; 
\item $\delta_{(b,1)}=1$ and $\varepsilon_{(b,1)}=0$ if $d_b=1$;
\end{itemize}
 where $\delta_{(b,j)}:=\delta_{\sigma_b(j)}$, $\varepsilon_{(b,j)}:=\varepsilon_{\sigma_b(j)}$.
\end{proposition}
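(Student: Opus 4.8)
The plan is to show that the map sending the constrained parameters $\bth_b=(\alpha_j,\varepsilon_j,\delta_j)_{j\in\Omega_b}$ to the block pmf \eqref{eq:distribution} is injective, by reconstructing the parameters from low-order marginals of $p(\cdot\,|\bth_b)$. The guiding observation is that the substitution $u_b\mapsto 1-u_b$ leaves the uniform latent law invariant while turning $\mathds{1}_{\{u_b<\alpha_j\}}$ into $\mathds{1}_{\{u_b>1-\alpha_j\}}$; hence flipping every $\delta_j$ into $1-\delta_j$ \emph{simultaneously} yields the same distribution. This global sign flip is the only non-uniqueness when $d_b\ge 3$, whereas for $d_b\le 2$ there are genuine additional degeneracies, and the role of the constraints is precisely to remove these.

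First I would recover the marginal parameters: since $\int_0^1 p(X_j=1|u_b,\bth_j)\,du_b=\alpha_j$, each univariate margin is Bernoulli$(\alpha_j)$, so every $\alpha_j$ is determined. Next I would read the dependency signs off the pairwise second moments. Writing $p_j(u)=(1-\varepsilon_j)\alpha_j+\varepsilon_j\mathds{1}_{H_j}(u)$, where the high-set is $H_j=[0,\alpha_j)$ if $\delta_j=1$ and $H_j=(1-\alpha_j,1]$ if $\delta_j=0$ (so $|H_j|=\alpha_j$), conditional independence gives
\[
\mathrm{Cov}(X_j,X_{j'})=\varepsilon_j\varepsilon_{j'}\big(|H_j\cap H_{j'}|-\alpha_j\alpha_{j'}\big).
\]
A short computation shows $|H_j\cap H_{j'}|=\min(\alpha_j,\alpha_{j'})$ when $\delta_j=\delta_{j'}$ and $|H_j\cap H_{j'}|=\max(0,\alpha_j+\alpha_{j'}-1)$ otherwise; since $\alpha_j,\alpha_{j'}\in(0,1)$, the bracket is then strictly positive in the first case and strictly negative in the second. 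As $\varepsilon_j,\varepsilon_{j'}>0$, the sign of $\mathrm{Cov}(X_j,X_{j'})$ reveals whether $\delta_j=\delta_{j'}$. Knowing all pairwise agreements determines the partition of $\Omega_b$ into the two classes $\{\delta=1\}$ and $\{\delta=0\}$ up to swapping the labels, and the normalization on $\delta_{(b,1)}$ selects one representative of this pair.

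It remains to recover the $\varepsilon_j$. For $d_b\ge 3$, fix $j$ and pick two further indices $k,l$; with the $\delta$-pattern (hence all bracket factors $c_{jk}:=|H_j\cap H_k|-\alpha_j\alpha_k$) already known, the three covariances give $\varepsilon_j^2=\frac{\mathrm{Cov}(X_j,X_k)\,\mathrm{Cov}(X_j,X_l)}{\mathrm{Cov}(X_k,X_l)}\cdot\frac{c_{kl}}{c_{jk}c_{jl}}$, and positivity of $\varepsilon_j$ fixes it, all denominators being nonzero by the strictness above. This closes the case $d_b\ge 3$. For $d_b=2$ only the single product $\varepsilon_1\varepsilon_2$ is accessible, so $\varepsilon_1$ and $\varepsilon_2$ cannot be separated without further information, which is exactly why one imposes $\varepsilon_{(b,1)}=\varepsilon_{(b,2)}$: the common value is then $\sqrt{\mathrm{Cov}(X_1,X_2)/c_{12}}$. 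For $d_b=1$ the distribution is simply Bernoulli$(\alpha_1)$ and carries no information on $(\varepsilon_1,\delta_1)$, which the constraints $\varepsilon_{(b,1)}=0$ and $\delta_{(b,1)}=1$ pin down.

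I expect the delicate step to be the treatment of the global sign flip rather than the moment bookkeeping. One must check that the normalization on $\delta_{(b,1)}$ genuinely removes one, and only one, degree of freedom: because flipping all $\delta_j$ sends each breakpoint $\beta_j$ to $1-\beta_j$ and hence reverses the ordering $\sigma_b$, the constraint is phrased through a labeling that itself depends on the unknown signs, so verifying that it isolates a unique representative of each flip-pair (generically in the $\alpha_j$) is where the care is needed. The strictness of the covariance inequalities, which guarantees that every pairwise sign is well defined, is the tool that should make this argument go through.
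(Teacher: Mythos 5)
Your proposal is correct and follows essentially the same route as the paper's proof: recover each $\alpha_j$ from the univariate margins, read the $\delta$-pattern off the sign of the pairwise quantities $p_{11}-\alpha_j\alpha_{j'}$ (anchored by the constraint $\delta_{(b,1)}=1$), and recover the $\varepsilon_j$ from the products $\varepsilon_j\varepsilon_{j'}$ — your triple-ratio formula for $\varepsilon_j^2$ is exactly the explicit solution of the system $(S)$ that the paper resolves by a scaling argument, and the $d_b\le 2$ cases are handled identically via the constraints. Your explicit identification of the $u_b\mapsto 1-u_b$ global flip as the symmetry killed by $\delta_{(b,1)}=1$, and the remark that the $\sigma_b$-labeling itself depends on the unknown signs, are worthwhile points of care that the paper's proof passes over silently.
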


The proposed model allows a wide range of dependencies. The following proposition is related the model parameters and Cramer's V. Thus, we can see that the full dependency (respectively anti-dependency) can be modelled by putting $\varepsilon_j=\varepsilon_{j'}$, $\alpha_{j}=\alpha_{j'}$ and $\delta_j=\delta_{j'}$ (respectively $\varepsilon_j=\varepsilon_{j'}$, $\alpha_{j}=1-\alpha_{j'}$ and $\delta_j=1-\delta_{j'}$).
\begin{proposition}[Dependency measures] \label{prop:dependency}
The dependency between two binary variables is often measured with Cramer's V. For the distribution defined by \eqref{eq:distribution}, Cramer's V between $X_j$ and $X_{j'}$ is zero. Moreover, for $j$ and $j'$  and $\beta_{(b,j)}\leq\beta_{(b,j')}$
\begin{equation}
V(X_j,X_{j'})= \varepsilon_{(b,j)}\varepsilon_{(b,j')}
\sqrt{\frac{\beta_{(b,j)}(1-\beta_{(b,j')})}{\beta_{(b,j')}(1-\beta_{(b,j)})}}. \label{eq:cramer}
\end{equation}
\end{proposition}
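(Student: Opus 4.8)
The plan is to reduce everything to a covariance computation, using that for two binary variables Cramer's V equals the absolute Pearson correlation, i.e. $V(X_j,X_{j'})=|\mathrm{Cov}(X_j,X_{j'})|/\sqrt{\mathrm{Var}(X_j)\mathrm{Var}(X_{j'})}$. The vanishing statement is then immediate: when $\omega_j\neq\omega_{j'}$ the block-independence built into the model forces $\mathrm{Cov}(X_j,X_{j'})=0$, hence $V(X_j,X_{j'})=0$. Since $X_j$ is marginally Bernoulli with parameter $\alpha_j$, I also record $\mathrm{Var}(X_j)=\alpha_j(1-\alpha_j)=\beta_j(1-\beta_j)$, the second equality following from $\beta_j\in\{\alpha_j,1-\alpha_j\}$. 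It then remains to treat a pair within a single block $b$, ordered so that $\beta_j\le\beta_{j'}$.

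First I would condition on the latent $U_b$. By the within-block conditional independence \eqref{eq:indptcond}, with $\pi_j(u):=\PP(X_j=1\mid U_b=u)$ one gets $\mathbb{E}[X_jX_{j'}\mid U_b=u]=\pi_j(u)\pi_{j'}(u)$. Introducing the centred functions $g_j(u):=\pi_j(u)-\alpha_j$ and using $\int_0^1\pi_j(u)\,du=\alpha_j$, the marginal products cancel and the covariance collapses to the single integral $\mathrm{Cov}(X_j,X_{j'})=\int_0^1 g_j(u)g_{j'}(u)\,du$. The reparametrisation \eqref{eq:newconditional} shows $\pi_j$ is the two-level step taking $\lambda_j$ on $[0,\beta_j)$ and $\nu_j$ on $[\beta_j,1)$, so each $g_j$ is piecewise constant with a single jump at $\beta_j$.

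The key simplification is to absorb the discrete parameter by setting $s_j:=2\delta_j-1\in\{-1,1\}$, which rewrites both cases $\delta_j\in\{0,1\}$ uniformly as $g_j(u)=s_j\varepsilon_j(1-\beta_j)$ on $[0,\beta_j)$ and $g_j(u)=-s_j\varepsilon_j\beta_j$ on $[\beta_j,1)$; in this form $\int_0^1 g_j=0$ is transparent. With $\beta_j\le\beta_{j'}$ I would split $[0,1]$ into $[0,\beta_j)$, $[\beta_j,\beta_{j'})$ and $[\beta_{j'},1)$, on each of which $(g_j,g_{j'})$ is constant, and add the three contributions. After factoring out $s_js_{j'}\varepsilon_j\varepsilon_{j'}$, the bracket factors as $\beta_j(1-\beta_{j'})$ times a residual that reduces to $1$, yielding the clean expression $\mathrm{Cov}(X_j,X_{j'})=s_js_{j'}\varepsilon_j\varepsilon_{j'}\,\beta_j(1-\beta_{j'})$.

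To conclude, I would divide by $\sqrt{\mathrm{Var}(X_j)\mathrm{Var}(X_{j'})}=\sqrt{\beta_j(1-\beta_j)\beta_{j'}(1-\beta_{j'})}$ and take absolute values, so that $|s_js_{j'}|=1$; simplifying the ratio $\beta_j(1-\beta_{j'})/\sqrt{\beta_j(1-\beta_j)\beta_{j'}(1-\beta_{j'})}$ to $\sqrt{\beta_j(1-\beta_{j'})/(\beta_{j'}(1-\beta_j))}$ gives exactly \eqref{eq:cramer}. I expect the only genuine obstacle to be the $\delta_j$-bookkeeping: before the substitution $s_j=2\delta_j-1$ the four sign cases make the three-region integral look unwieldy, whereas expressing the step heights through $\beta_j$ and the single sign $s_j$ makes the three pieces recombine with no case analysis, after which the algebra is routine.
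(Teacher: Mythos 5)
Your proof is correct and follows essentially the same route as the paper's: both integrate out the latent uniform $U_b$ to obtain the bivariate law of $(X_j,X_{j'})$, identify the deviation from independence as $r=\varepsilon_{(b,j)}\varepsilon_{(b,j')}\beta_{(b,j)}(1-\beta_{(b,j')})$ (up to sign), and feed it into Cramer's V --- your phi-coefficient identity $V=|\mathrm{Cov}(X_j,X_{j'})|/\sqrt{\mathrm{Var}(X_j)\mathrm{Var}(X_{j'})}$ is, for $2\times 2$ tables, the same as the paper's direct application of the definition to the cell probabilities $p_{hh'}$. If anything your write-up is slightly more careful than the paper's: the sign factor $s_js_{j'}$ you carry through the covariance is suppressed in the paper's $r$, which as written ignores that the deviation is negative when $\delta_j\neq\delta_{j'}$, a discrepancy that disappears only once the absolute value in Cramer's V is taken.
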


\section{Parameter inference} \label{sec3}
\subsection{Inference function for Margins}
We observed a sample $\tx=(\bx_1,\ldots, \bx_n)$  assumed to be composed of $n$ independent realizations of the proposed model. The likelihood related to  model $\bomega$ is defined by
\begin{equation}
p(\tx|\bomega,\bth)= \prod_{i=1}^n \prod_{b=1}^{\B} p(\bx_{i\{b\}} |\bth_b).
\end{equation}
The \emph{log-likelihood} function is defined by
\begin{equation}
L(\balpha,\bdelta,\bvarepsilon;\tx,\bomega)=\sum_{i=1}^n  \sum_{b=1}^{\B} \ln p(\bx_{i\{b\}} | \bth_b),
\end{equation}
where $\balpha=(\alpha_j;j=1,\ldots,d)$, $\bdelta=(\delta_j;j=1,\ldots,d)$ and $\bvarepsilon=(\varepsilon_j;j=1,\ldots,d)$.
The proposed distribution is a multivariate copula-based model since each multivariate parametric distribution can be defined as a copula. When the model at hand is a copula with discrete margins, the maximization of the likelihood is quite difficult. Therefore, we use the Inference Function for Margins (IFM) procedure \citep{joe1997multivariate}. This estimation procedure is based on two optimization steps. The first step maximizes the likelihood of univariate margins. The second step maximizes the dependency parameters with the univariate parameters hold fixed from the first step. \citet{joe2005asymptotic} shows the asymptotical efficiency of such a procedure. Thus, the parameters $\hat{\bth}=(\hat{\balpha},\hat{\bdelta},\hat{\bvarepsilon})$ are estimated by the two following steps: \\
\textbf{Margin step:} for $j\in\{1,\ldots,d\}$
$$ \hat{\alpha}_j=\frac{1}{n} \sum_{i=1}^n x_{ij},$$
\textbf{Dependency step:}
$$ (\hat{\bdelta},\hat{\bvarepsilon})=\arg\max_{(\bdelta,\bvarepsilon)} L(\hat{\balpha},\bdelta,\bvarepsilon;\tx,\bomega).$$

The margin step is easily performed, but the search of $(\hat{\bdelta},\hat{\bvarepsilon})$ at the dependency step requires solving equations having no analytical solution (except when $d_b=2$). This step is also achieved by using the latent structure of the data when $d_b>2$ (details are given in Section~\ref{sec:EM}). When $d_b=2$, for $j\in\Omega_b$:
\begin{equation}
\hat{\delta}_{(b,2)}=\left\{ \begin{array}{rl}
1 & \text{ if } n_{11}\geq \hat{\alpha}_{(b,1)}\hat{\alpha}_{(b,2)}\\
0 & \text{ if } n_{11}> \hat{\alpha}_{(b,1)}\hat{\alpha}_{(b,2)}\\
\end{array}\right. \text{ and } 
\hat{\varepsilon}_{(b,1)} = \hat{\varepsilon}_{(b,2)}=
\sqrt{
 \frac{|n_{11} - \hat{\alpha}_{(b,1)}\hat{\alpha}_{(b,2)}|}{\hat{\beta}_{(b,1)}(1-\hat{\beta}_{(b,2)})}},
\end{equation}
where $n_{11} = \frac{1}{n} \sum_{i=1}^n x_{ij_1}x_{ij_2}$ with $j_1\in\bOmega_{b}$,  $j_2\in\bOmega_{b}$ and $j_1\neq j_2$.

\subsection{An EM algorithm for the dependency step} \label{sec:EM}
Since the blocks of the one-factor distributions imply latent variables, it is natural to perform the dependency steps of the IFM procedure with an Expectation-Maximization (EM) algorithm \citep{dempster1977maximum, mclachlan2008em} when $d_b>2$. The \emph{complete-data likelihood} is defined by
\begin{equation}
L(\bth;\tx,\tu,\bomega)= \sum_{j=1}^n L(\bth_j;\tx,\tu,\bomega)
\end{equation}
where 
\begin{align}
L(\alpha_j,\delta_j,\varepsilon_j;\tx,\tu,\bomega) =
&\sum_{i=1}^n z_{ij} \left(
x_{ij} \ln \lambda_j + 
(1-x_{ij}) \ln (1-\lambda_j) \right) \\ \nonumber
& + (1-z_{ij}) \left(
x_{ij} \ln \nu_j + 
(1-x_{ij}) \ln( 1-\nu_j)
\right),
\end{align}
where $z_{ij}=1$ if $0\leq u_{i\omega_j}<\beta_j$ and  $z_{ij}=0$ if $\beta_j\leq u_{i\omega_j}\leq 1$. 
The EM algorithm is an iterative algorithm which alternates between two steps: the computation of conditional expectation of the complete-data log-likelihood (E step) and its maximization (M step) on $(\bdelta,\bvarepsilon)$. Note that the estimate $\hat{\balpha}$ is not modified by the algorithm. Its iteration $[r]$ is written as:\\
\textbf{E step:} Computation of the complete-data log-likelihood, for $j\in\{1,\ldots,d\}$
\begin{equation}
t_{ij}(\bth^{[r]})=\mathbb{E}[Z_{ij}|\bx_i,\bomega,\bth_j^{[r]}] = \frac{\lambda_j^{[r]} \beta_j^{[r]}}{\lambda_j^{[r]} \beta_j^{[r]} + \nu_j^{[r]} (1-\beta_j^{[r]})}.
\end{equation}
\textbf{M step:} Maximization over $(\delta_j,\varepsilon_j)$, for $j\in\{1,\ldots,d\}$
\begin{align}
\delta_j^{[r+1]} & = \mathds{1}_{\{ \max_{\varepsilon_j} L(\hat{\alpha}_j,\delta_j=1,\varepsilon_j;\tx,\ttt^{[r]},\bomega) >  \max_{\varepsilon_j} L(\hat{\alpha}_j,\delta_j,\varepsilon_j;\tx,\ttt^{[r]},\bomega)\}},\\
\varepsilon_j^{[r+1]} &= \argmax_{\varepsilon_j}  L(\hat{\alpha}_j,\delta_j^{[r+1]},\varepsilon_j;\tx,\ttt^{[r]},\bomega),
\end{align}
where $\bth^{[r]}_j=(\hat{\alpha}_j,\delta_j^{[r]},\varepsilon_j^{[r]})$, $\lambda_j^{[r]}=(1-\varepsilon_j^{[r]})\hat{\alpha}_j + \varepsilon_j^{[r] \delta_j^{[r]}}$, $\nu_j^{[r]}=(1-\varepsilon_j^{[r]})\hat{\alpha}_j + \varepsilon_j^{[r] 1-\delta_j^{[r]}}$. Thus, the M step involves the search of the maximum over $\varepsilon_j\in]0,1[$ of $L(\hat{\alpha}_j,\delta_j,\varepsilon_j;\tx,\ttt^{[r]},\bomega)$. This maximization is easily performed since it only leads to solve a quadratic equation as shown by Appendix~\ref{app:EM}.

\section{Model selection} \label{sec4}
\subsection{Information criterion}
Model selection is obviously necessary when we are faced with  model-based statistical inference. 
When the model pmf is given by~\eqref{bloclik}, selecting a model means identifying the repartition of the variables into independent blocks. 
The challenge also consists of finding the best model according to the data among a set of competing models. 
In a Bayesian framework, the best model is defined by the model having the highest posterior probability. 
By assuming that uniformity holds for the prior distribution of $\bomega$, the best model also maximizes the integrated likelihood $p(\tx|\bomega)$ where
\begin{equation}
p(\bomega|\tx) \propto p(\tx|\bomega) \text{ with } p(\tx|\bomega)=\int_{\bTh}  p(\tx|\bomega, \bth) p(\bth|\bomega)d\bth,
\end{equation}
and $p(\bth|\bomega)$ corresponds to the prior distribution of the parameters of model $\bomega$. However, this integral has not a closed form. In thus case, the BIC \citep{schwarz1978bic} is used for approaching the logarithm of the integrated likelihood by using a Laplace approximation. It is defined by
\begin{equation}
\BIC(\bomega)=L(\hat{\bth}_{\bomega};\tx,\bomega) - \frac{\nu_{\bomega}}{2} \ln(n), \label{eq:bic}
\end{equation}
where $\nu_{\bomega}$ corresponds to the number of continuous parameters involved in model $\bomega$ and where $\hat{\bth}_{\bomega}$ is the MLE of model $\bomega$. As shown by~\cite{gao2010composite}, the MLE can be replaced in \eqref{eq:bic} by the estimate provided by the IFM procedure. Thus, we want to obtain model $\bomegastar$ which maximizes the BIC criterion among all the competing models.

The number of competing models is too huge for applying an exhaustive approach. Therefore, Section~\ref{sec:HAC} presents a deterministic procedure for model selection. This procedure applies a filter among the competing models and only selects $d$ models. Then, the BIC criterion is computed for each of the  selected models. We show that this procedure returns the correct model $\bomegastar$ asymptotically with probability one. Moreover, Section~\ref{sec:MH} presents a stochastic algorithm which finds $\bomegastar$. Section~\ref{sec5} shows that both procedures have the same behaviour for detecting the true model, but that the deterministic procedure is drastically faster than the stochastic procedure. Both procedures are implemented in the \textsf{R} package MvBinary, but we advise to only use the deterministic procedure for computing reasons.

\subsection{Deterministic approach for model selection} \label{sec:HAC}
This deterministic procedure has two steps. First, the \emph{reduction step} reduces the number of competing models to only $d$ competing models. Second, the \emph{comparison step} computes the BIC criterion for each of the $d$ resulting models and the model maximizing the BIC criterion is returned.

The reduction step decreases the number of competing models by using the empirical dependencies between the variables. Indeed, it performs the Hierarchical Ascendant Classification (HAC) of the variables based on the empirical Cramer's V. This procedure proposes $d$ partitions corresponding to the $d$ competing models on which the BIC criterion will be computed. Each model proposed by the HAC is relevant since it models the strongest empirical dependencies. Moreover, the HAC provides embedded partitions of variables  
and then reduces the calls to the EM algorithm.

The deterministic procedure based on HAC performs the model selection with the two following steps:\\
\textbf{Reduction step} performs the HAC based on the empiric Cramer's V to defined the $d$ partitions of the variables.\\
\textbf{Comparison step}  computes  $\BIC\big( \bomega^{[k]}\big)$ for $k=1,\ldots,d$, where $\bomega^{[k]}$ is such that each block $b$ is composed by the variables affiliated to class $b$ by the partition into $k$ classes of the HAC.\\
The procedure returns $\displaystyle{\argmax_{k =1,\ldots, d} \BIC\big(\bomega^{[k]}\big)}$.

\begin{proposition}[Consistency of the HAC-based procedure] \label{prop:consistency}
The HAC-based procedure is asymptotically consistent (\emph{i.e.} it returns the true model with probability one when $n$ grows to infinity).
\end{proposition}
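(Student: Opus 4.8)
The plan is to split the consistency claim into two parts that mirror the two steps of the procedure: (i) the \emph{reduction step} keeps the true model $\bomegastar$ among its $d$ candidate partitions with probability tending to one, and (ii) the \emph{comparison step} then singles it out through the BIC. Throughout I write $\B^\star$ for the number of blocks of $\bomegastar$ and I use the (implicit) assumption that the data-generating model is well specified with genuine blocks, i.e. every variable lying in a block of size larger than one has $\varepsilon_j>0$, so that within-block pairs are truly dependent and distinct blocks are truly independent.

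First I would control the reduction step. The empirical Cramer's V of any pair is a fixed continuous function of the empirical moments $\hat{\alpha}_j$ and $n_{11}$, which converge almost surely to their population counterparts by the law of large numbers; the continuous mapping theorem then gives convergence of the whole empirical Cramer's V matrix to the population one. By Proposition~\ref{prop:dependency} the population matrix is exactly zero for pairs in different blocks and, under the genuineness assumption and formula \eqref{eq:cramer}, bounded below by some $v_{\min}>0$ for pairs inside a common block. As there are only finitely many pairs, for $n$ large the event that every within-block empirical similarity exceeds $v_{\min}/2$ while every between-block one falls below $v_{\min}/2$ has probability tending to one. On that event, for the linkage used the cluster-level similarity between two sub-clusters of one true block stays above the gap while that between sub-clusters of distinct blocks stays below it, so the HAC performs all within-block merges before any between-block merge; hence its partition into $\B^\star$ classes coincides with $\bomegastar$ and $\bomegastar\in\{\bomega^{[1]},\ldots,\bomega^{[d]}\}$ with probability tending to one.

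Next I would establish BIC consistency over the $d$ retained candidates. The structural fact is that every $\bomega^{[k]}$ with $k\neq\B^\star$ is misspecified. Because the HAC partitions are nested and $\bomega^{[\B^\star]}=\bomegastar$ on the good event, a finer candidate ($k>\B^\star$) refines $\bomegastar$ and thus forces independence between genuinely dependent variables, which the family cannot reproduce; a coarser candidate ($k<\B^\star$) merges truly independent blocks into one one-factor block, and by \eqref{eq:cramer} a vanishing cross-block dependence inside one block would force $\varepsilon_j=0$ for one of the two variables, annulling at the same time its within-block dependencies, a contradiction, so the merged family again cannot reproduce the truth. In either case the population expected log-likelihood of the candidate is strictly below that of $\bomegastar$, so consistency of the IFM estimators yields $L(\hat{\bth}_{\bomegastar};\tx,\bomegastar)-L(\hat{\bth}_{\bomega^{[k]}};\tx,\bomega^{[k]})=c_k\,n+o_p(n)$ with $c_k>0$. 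Since the penalty difference is only $O(\ln n)$, the gap $\BIC(\bomegastar)-\BIC(\bomega^{[k]})$ diverges to $+\infty$, so each wrong candidate is beaten with probability tending to one.

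Finally I would combine the two claims: the number of candidates is exactly $d$, so a union bound shows that simultaneously $\bomegastar$ is retained and strictly maximises the BIC among $\bomega^{[1]},\ldots,\bomega^{[d]}$ with probability tending to one, and on that event the procedure returns $\bomegastar$. The step I expect to require most care is the misspecification argument for the coarser candidates: one must rule out that the added flexibility of a larger one-factor block could imitate a block-independent truth, which is exactly what the structure of \eqref{eq:cramer} forbids. The separation gap driving the reduction step and the classical $n$-versus-$\ln n$ rate comparison in the BIC are then routine.
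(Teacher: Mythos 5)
Your proof is correct and follows essentially the same route as the paper: convergence of the empirical Cramer's V matrix (law of large numbers plus continuous mapping theorem), a fixed separation gap between within-block and between-block values of the population matrix (the paper's $\mu^0<1$ versus your $v_{\min}>0$), an induction showing the HAC performs all within-block merges first so that $\bomegastar$ appears among the $d$ candidates, and finally BIC consistency over the retained candidates. The only difference is that you spell out the misspecification argument for the finer and coarser candidates (including why a merged one-factor block cannot imitate cross-block independence via \eqref{eq:cramer}), whereas the paper disposes of this final step with a one-line appeal to the consistency of the BIC criterion.
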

Proof is given in Appendix~\ref{app:consistency}.


\subsection{Stochastic approach for model selection} \label{sec:MH}
Model $\bomegastar$ can be determined through a Metropolis-Hastings algorithm \citep{robert2004monte}.
This algorithm performs a random walk over the competing models and its unique invariant distribution is proportional to $\exp{\big(\BIC(\bomega)\big)}$.
Therefore, $\bomegastar$ is the mode of its stationary distribution. It is also sampled with probability one by the algorithm when the number of iterations  $R$ grows to infinity.

At iteration $[r]$, the algorithm samples a model candidate $\tilde{\bomega}$ from the distribution $q(.|\bomega^{[r]})$ where $\bomega^{[r]}$ corresponds to the current model. More precisely, candidate
$\tilde{\bomega}$ is equal to the current model $\bomega^{[r]}$ except for
variable $j^{[r]}$ randomly sampled which is affiliated into block $b^{[r]}$ randomly sampled in $\{1,\ldots,\max(\bomega^{[r]})+1\}$.
 This candidate  is accepted with a probability equal to
\begin{equation}
\rho^{[r]}=\frac{\exp\big(\BIC(\tilde{\bomega})\big) q(\bomega^{[r]}|\tilde{\bomega})}{\exp\big(\BIC(\bomega^{[r]})\big) q(\tilde{\bomega}|\bomega^{[r]})}.
\end{equation}
 This algorithm performs $R$
iterations and returns the model maximizing the BIC criterion. In practice,
there may be almost absorbing states, so different
initialisations of this algorithm ensure to visit $\bomegastar$.
Thus, starting from $\bomega^{[0]}$, uniformly sampled among the competing models, the algorithm performs $R$ iterations and returns $\displaystyle{\argmax_{r=1,\ldots,R} \BIC(\bomega^{[r]})}$. Its iteration $[r]$ performs the two following steps:\\
 \textbf{Candidate step:} $\tilde{\bomega}$ is sampled from $q(.|\bomega^{[r]})$.\\
\textbf{Acceptance/reject step:} defined $\bomega^{[r]}$ with
   	\begin{equation*}
 	\bomega^{[r]} =\left\{ \begin{array}{rl}
 	\tilde{\bomega} & \text{with probability } \rho^{[r]} \\
 	\bomega^{[r-1]} & \text{otherwise}
 	\end{array}\right. .
 	\end{equation*}
  

\section{Numerical experiments}
\label{sec5}

\subsection{Suitability of the HAC-based procedure}
This simulation shows the relevance of competing models provided by the reduction step of the HAC-based procedure. Data are simulated from the proposed model with the following parameters
\begin{equation}
\label{eq:paramsimu}
d=10,\;
\delta_j=1, \;
\alpha_j=0.4
\text{ and } 
\omega_j=\left\{
\begin{array}{rl}
1 & \text{if } 1\leq j\leq 5 \\
2 & \text{if } 6\leq j \leq 10
\end{array}
\right. .
\end{equation}
For different sizes of sample $n$ and strengths of dependencies $\varepsilon_j$, we check if the true model belongs to the list of models returned by reduction step of the HAC-based procedure. Table~\ref{tab:simulation1} shows the results obtained on 50 samples for different values of $(n,\varepsilon_j)$.

\begin{table}[ht!]
\begin{center}
\begin{tabular}{cccccc}
$n|\varepsilon_j$ & 0.2 & 0.3 & 0.4 & 0.5 & 0.6 \\ 
\hline 50 & 0 & 1 & 4 & 28 & 37 \\ 
100 & 1 & 1 & 20 & 41 & 49 \\ 
200 & 0 & 9 & 40 & 48 & 20 \\ 
400 & 2 & 29 & 47 & 50 & 50 \\ 
800 & 3 & 44 & 50 & 50 & 50 \\ 
1600 & 21 & 49 & 50 & 50 & 50 \\ 
3200 & 40 & 50 & 50 & 50 & 50 \\ 
\hline 
\end{tabular} 
\end{center}
\caption{Number of times where the true model belongs to the list of models returned by the reduction step of the HAC-based procedure on 50 samples.} \label{tab:simulation1}
\end{table}
Thus, whatever the strength of dependencies, the procedure asymptotically proposes the true model. Obviously, for a fixed sample size, results are better when the dependencies are strong since the number of times where the true  model belongs to the list of models is increasing with the dependency strength. 

\subsection{Comparison of model selection procedures}
Both procedures of model selection are compared on data sampled from the proposed model with the parameters defined in \eqref{eq:paramsimu}. To compare the quality of the estimates returned by both procedures, we use the Kullback-Leibler divergence. As shown by Table~\ref{tab:simulation2a}, both procedures are consistent since the Kullback-Leibler divergence asymptotically vanishes. Moreover, the estimates have the same quality (equal value of the Kullback-Leibler divergence). However, the HAC-based procedure is considerably faster than the Metropolis-Hastings procedure as shown by  Table~\ref{tab:simulation2b}. So, we recommend to use the HAC-based procedure to perform the model selection in high dimension.

\begin{table}[ht!]
\begin{center}
\begin{tabular}{ccccccccccc}
$n|\varepsilon_j$ & \multicolumn{2}{c}{0.2} & \multicolumn{2}{c}{0.3} & \multicolumn{2}{c}{0.4} & \multicolumn{2}{c}{0.5}\\ 
& HAC & MH & HAC & MH & HAC & MH & HAC & MH \\
\hline 50 & 0.15 & 0.16 & 0.24 & 0.27 & 0.36 & 0.36 & 0.39 & 0.46 \\
100  & 0.08 & 0.09 & 0.13 & 0.14 & 0.22 & 0.21 & 0.12 & 0.17 \\
200  & 0.04 & 0.04 & 0.10 & 0.10 & 0.08 & 0.11 & 0.05 & 0.06 \\
400  & 0.03 & 0.03 & 0.06 & 0.06 & 0.03 & 0.03 & 0.03 & 0.03 \\
\hline 
\end{tabular} 
\end{center}
\caption{Kullback-Leibler divergence obtained with the estimates provided by both procedure of model selection.} \label{tab:simulation2a}
\end{table}

\begin{table}[ht!]
\begin{center}
\begin{tabular}{ccccccccccc}
$n|\varepsilon_j$ & \multicolumn{2}{c}{0.2} & \multicolumn{2}{c}{0.3} & \multicolumn{2}{c}{0.4} & \multicolumn{2}{c}{0.5}\\ 
& HAC & MH & HAC & MH & HAC & MH & HAC & MH \\
\hline 50 & 11 & 217 & 10 & 250 & 9 & 278 & 8 & 381 \\
100  & 12 & 241 & 11 & 250 & 10 & 354 & 8 & 633 \\
200  & 14 & 276 & 13 & 308 & 11 & 662 & 9 & 912 \\
400  & 16 & 296 & 15 & 509 & 12 & 1218 & 9 & 933 \\
\hline 
\end{tabular} 
\end{center}
\caption{Computing time in seconds required by the two procedures of model selection.} \label{tab:simulation2b}
\end{table}

\subsection{Model selection for high-dimensional data}
This section shows the behaviour of the HAC-based procedure in high dimension. Data are generated from a model with blocks of five dependent variables $(d_b=1)$, with equal marginal probabilities ($\alpha_j=0.4$) and equal dependency strength ($\varepsilon_j=0.4$ and $\delta_j=1$). For different sizes of sample and numbers of variables, 50 samples are generated. 

Table~\ref{tab:simulation3} shows the relevance of the deterministic procedure by using the Adjusted Rand Index (ARI) to compare the true partition of the variables into blocks and its estimated.  Indeed, whatever the number of variables, the procedure provides the true model with probability one when $n$ grows to infinity. However, for small samples the procedure can provide a model slightly different to the true model.

\begin{table}[ht!]
\begin{center}
\begin{tabular}{cccccc}
$n|d$ & 10 & 20 & 50 & 100 & 200 \\ 
\hline 50 &  \textbf{0.11} ( 0.18 ) & \textbf{0.07} ( 0.03 ) & \textbf{0.10} ( 0.06 ) & \textbf{0.10} ( 0.04 ) & \textbf{0.06} ( 0.02 ) \\ 
100 & \textbf{0.35} ( 0.33 ) & \textbf{0.35} ( 0.24 ) & \textbf{0.24} ( 0.14 ) & \textbf{0.22} ( 0.07 ) & \textbf{0.15} ( 0.03 ) \\ 
200 & \textbf{0.85} ( 0.27 ) & \textbf{0.78} ( 0.20 ) & \textbf{0.67} ( 0.11 ) & \textbf{0.56} ( 0.07 ) & \textbf{0.43} ( 0.05 ) \\ 
400 & \textbf{0.97} ( 0.09 ) & \textbf{0.95} ( 0.07 ) & \textbf{0.95} ( 0.05 ) & \textbf{0.91} ( 0.05 ) & \textbf{0.86} ( 0.05 ) \\ 
800 & \textbf{1.00} ( 0.00 ) & \textbf{1.00} ( 0.01 ) & \textbf{1.00} ( 0.01 ) & \textbf{1.00} ( 0.01 ) & \textbf{1.00} ( 0.01 ) \\ 
\hline 
\end{tabular} 
\end{center}
\caption{Mean (in bold) and standard deviation (in parenthesis) of the ARI between $\bomega^0$ and $\bomegastar$.} \label{tab:simulation3}
\end{table}

\section{Application to plant distribution in the USA}\label{sec6}
\subsubsection*{Dataset}
Data has been extracted from the USA plants database, July 29, 2015. It describes $35 583$ plants by indicating if they occur in 69 states (USA, Canada, Puerto Rico, Virgin Islands, Greenland and St Pierre and Miquelon). 
By modelling the data distribution, the flora variety of each states could be characterized.
Moreover, one can expect bring out geographic dependencies between the variables. 
The data are available in the \textsf{R} package MvBinary which implements the proposed method.

\subsubsection*{Experiment conditions}
The model selection is achieved by the deterministic algorithm (see Section~\ref{sec:HAC}) where the Ward criterion is used for the HAC. 
The EM algorithm is randomly initialized 40 times and it is stopped when two successive iterations increase the log-likelihood less than 0.01.

\subsubsection*{Model coherence}
Figure~\ref{fig:Cramer} shows the relevance of the dependencies detected by the estimated model. 
Indeed, Figure~\ref{fig:qqplot} shows the correspondence between Cramer's V computed with the model parameter and the empirical Cramer's V, for each pair of variables claimed to be dependent by the estimated model. 
Moreover, Figure~\ref{fig:boxplot} shows that the estimated model well represents the main dependencies.

\begin{figure}[ht!]
  \centering
  \subfloat[Empiric Cramer's V and Cramer's V estimated by the model.]{\label{fig:qqplot}\includegraphics[scale=0.45]{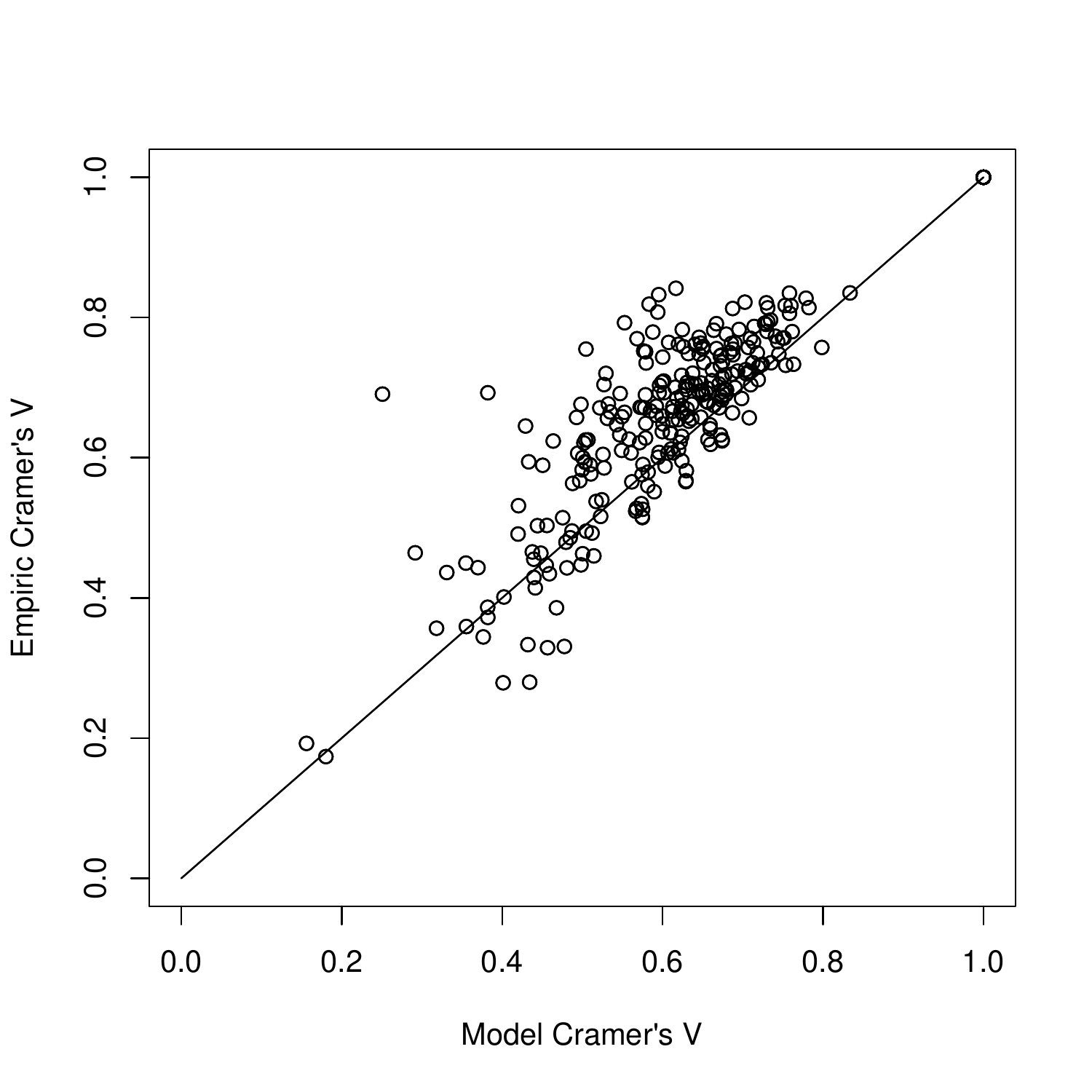}}
  \hspace{5pt}
  \subfloat[Boxplot of the Empiric Cramer's V for the modelled and not modelled dependencies]{\label{fig:boxplot}\includegraphics[scale=0.45]{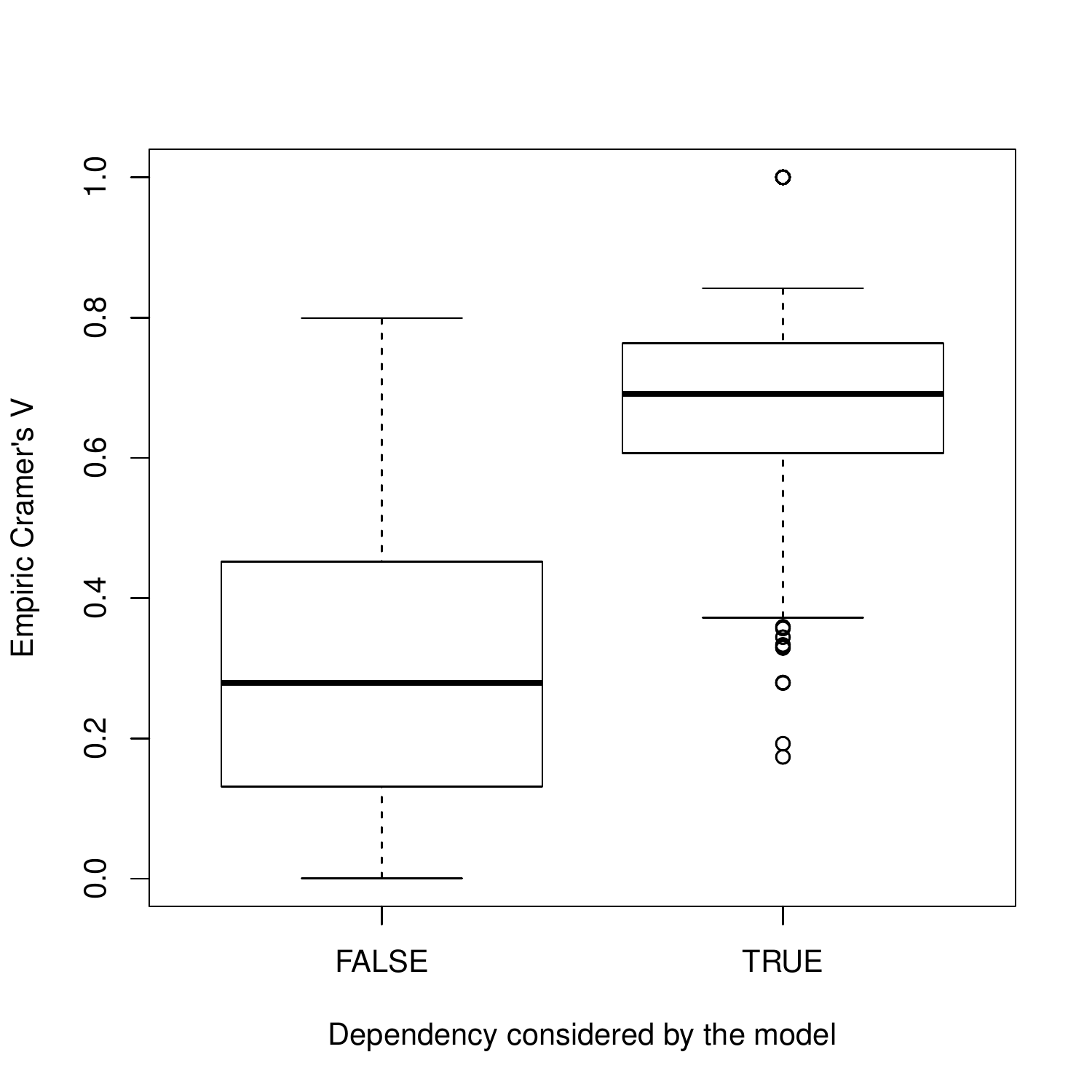}}
  \hspace{5pt}
  \caption{Visualisation of the dependencies taken into account by the model.}
  \label{fig:Cramer}
\end{figure}

The estimated model is composed of 10 blocks of dependent variables. Figure~\ref{fig:map} shows that this block repartition has a geographic meaning.
\begin{figure}[ht!]
\centering \includegraphics[scale=0.4]{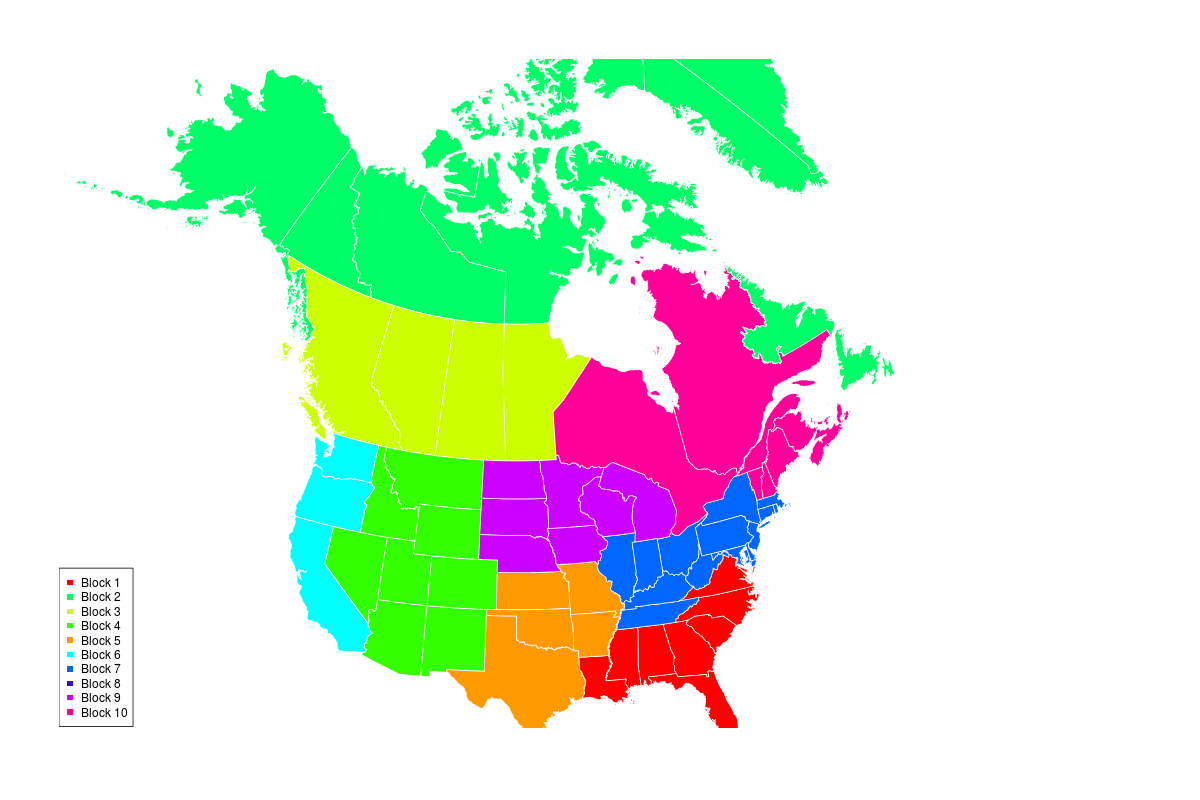}
\caption{Geographic coherence of the blocks of states (color indicates the block assignment)}
\label{fig:map}
\end{figure}

\subsubsection*{Model interpretation}
Parameters permit an easy interpretation of the whole distribution. The mean per block of the values of $\hat{\alpha}_j$ and $\hat{\varepsilon}_j$ are summarized by Figure~\ref{fig:parameters}. Note that the dependencies detected by the model are all positive since for $j=1,\ldots,d$, $\hat{\delta}_j=1$. 

\begin{figure}[ht!]
\centering \includegraphics[scale=0.6]{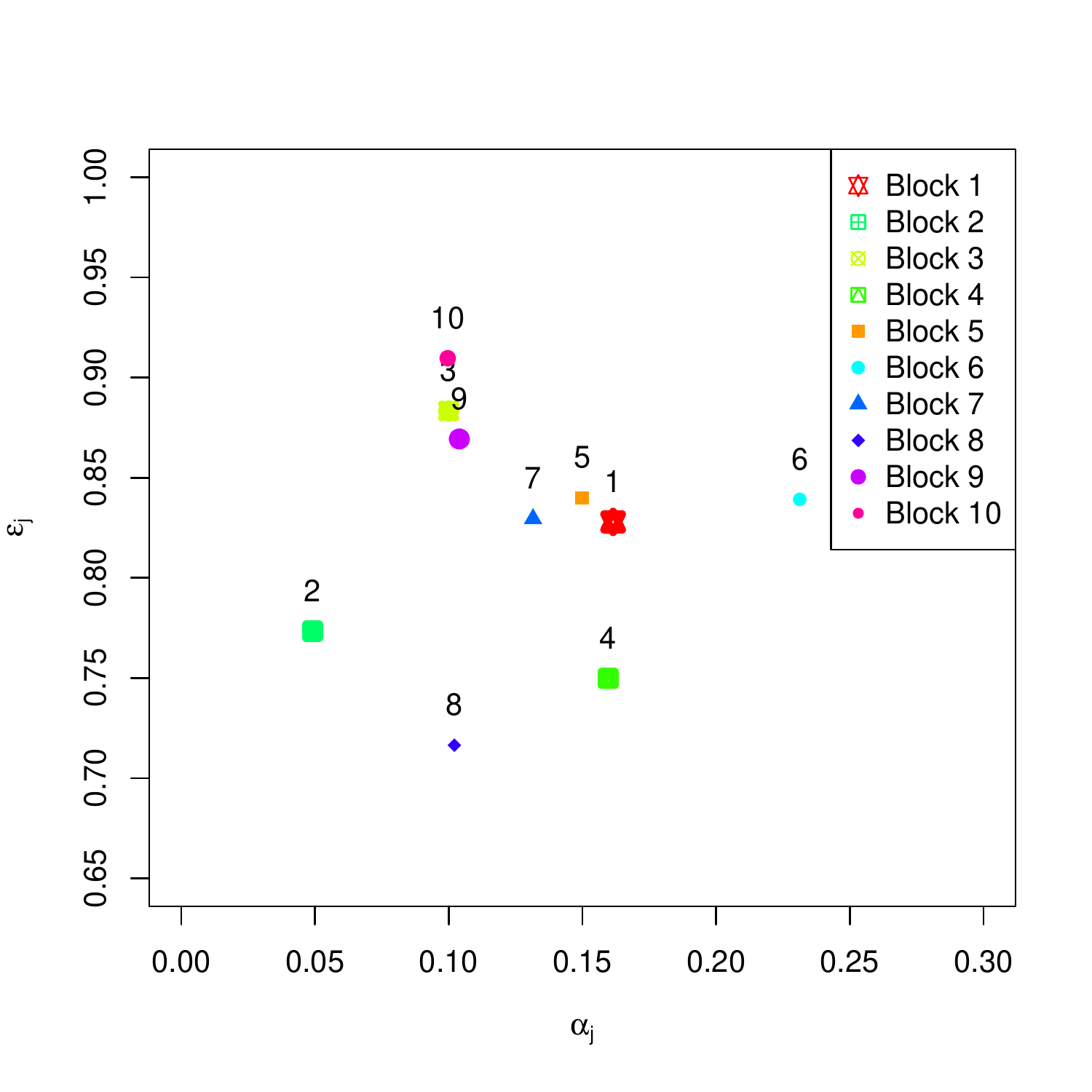}
\caption{Summary of the parameters by blocks}
\label{fig:parameters}
\end{figure}

Each block is composed of highly dependent variables (high values of parameters $\hat{\varepsilon}_j$ and $\hat{\delta}_j=1$). 
Therefore, the knowledge of one variables of a block provides strong information about the other variables affiliated into this block. 
For instance, the most dependent block is Block 10 (composed by Prince Edward Island, Nova Scotia, New Brunswick, New Hampshire, Vermont, Maine, Qu\'ebec and Ontario).
Thus, a plant occurs in Ontario with probability $\hat{\alpha}_{Ontario}=0.14$ while it occurs with a probability $0.83$ if this plant occurs in Qu\'ebec. 
The least dependent block is composed of tropical states (Virgin Islands, Puerto Rico and Hawaii). 
These weaker dependencies can be explained by large geographic distance. 
Finally, parameters $\alpha_j$ allow to described the region by their amount of plants. 
Cold regions (Blocks 2, 3 and 10) obtains small values of $\hat{\alpha}_j$ while the "sun-belt" obtains large values of this parameter.

\section{Conclusion} \label{sec7}
In this paper, we have introduced a new family of distributions for large binary datasets.
This family implies that the variables are grouped into independent blocks and that each block follows a specific one factor distribution.
This new family has many good properties.
Indeed, it verifies the five features required by \citet{nikoloulopoulos2013copula} for a ``good'' distribution.
Moreover, it permits an easy interpretation of the whole distribution.
The variable repartition puts the light on the main dependencies.
Moreover, each variable is summarized by its marginal probability (parameter $\alpha_j$) and by its strength (parameter $\varepsilon_j$) and its kind (parameter $\delta_j$) of dependency with the other block variables.
Finally, this model is suitable for modelling large binary data since its number of parameters is linear in $d$.

We have proposed to circumvent the combinatorial problem of model selection with a deterministic procedure which reduces the number of competing models by using the empirical dependencies.
Although this procedure does not ensure the maximization of the BIC, its consistency has been demonstrated.
Numerical experiments have shown that this approach provides estimates having the same quality as a stochastic (and optimal) procedure, but it strongly reduces the computing time.
The \textsf{R} package MvBinary implements both procedures of inference and contains the data set used in the application.

Many extension of this work can be envisaged.
Indeed, parsimony extensions could be introduced by imposing equality constraints between the block parameters (\emph{e.g} $\forall j\in \Omega_b$, $\varepsilon_j=c_b$ where $c_b\in]0,1]$). 
Moreover, more complex dependencies could be modelled by considering more than one factor and by keeping the same kind of parametrization.
However, the parameter estimation and the likelihood computation would be more complex.
Indeed, the pmf of block $b$ would be defined as a sum of $(d_b+1)^2$ terms, while it is currently a sum of $d_b+1$ terms.

Finally, this model could be an answer to difficult task of the binary data clustering with intra-component dependencies.
Indeed, the clustering aim could be achieved by considering a finite mixture of the proposed distribution.
However, the challenge of model selection would be a complex issue.
Moreover, the model identifiability should be carefully studied.

\appendix 
\section{Proofs of the model properties} \label{app:propositions}

\begin{proof}[Proof of Proposition~\ref{prop:likelihood}]
It suffices to remark that \eqref{pdf} can be decomposed into $d_b+1$ integrals whose bounds are given by the coefficients $\beta_{(b,j)}$. By using the conditional independence between variables given in \eqref{eq:indptcond} and the conditional distribution of $x_j$ given by \eqref{eq:newconditional},  function $p(\bx_{\{b\}}|u_b,\bth_b)$ is a piecewise constant function of $u_b$. Thus, for $u_b\in[\beta_{(b,j)},\beta_{(b,j+1)}[$, $p(\bx_{\{b\}}|u_b,\bth_b)$ is constant and equal to $f_{b}(j)$ defined by \eqref{eq:fbj}. Then,
\begin{align*}
p(\bx_{\{b\}}|u_b,\bth_b) 
&=\int_0^{\beta_{(b,1)}} p(\bx_{\{b\}}|u_b,\bth_b) du
+\sum_{j=1}^{d_b-1} \int_{\beta_{(b,j)}}^{\beta_{(b,j+1)}} p(\bx_{\{b\}}|u_b,\bth_b) du
+\int_{\beta_{(b,d_b)}}^1 p(\bx_{\{b\}}|u_b,\bth_b) du\\
&=\beta_{(b,1)} f_{b}(0;\bth_b) + \sum_{j=1}^{d_b-1} (\beta_{(b,j+1)} - \beta_{(b,j)}) f_{b}(j;\bth_b) + (1-\beta_{(d)}) f_{b}(d_b;\bth_b).
 \end{align*}
\end{proof}

\begin{proof}[Proof of Proposition~\ref{prop:ident}]
We define that the distribution is identifiable if for two vectors of parameters $\bth_{b}=(\alpha_j,\varepsilon_j,\delta_j;j\in\bOmega_b)$ and $\bth_{b}'=(\alpha_j',\varepsilon_j',\delta_j';j\in\bOmega_b)$ such that
\begin{equation}
\forall \bx_{\{b\}}, \; p(\bx_{\{b\}}|\bth_b)=p(\bx_{\{b\}}|\bth_b') \text{ then } \bth_b=\bth_b'. \label{definitionidentifiabilite}
\end{equation}
Without loss of generality, we assume that $\alpha_j\leq \alpha_{j+1}$. The equality $\alpha_j=\alpha_{j'}$ is directly obtained since $\forall j \in \bOmega_b$, $\alpha_j=p(x_j=1|\bth_{b})=p(x_j=1|\bth_{b}')=\alpha_j'$. The proof distinguishes three cases: one variable in the block (\emph{i.e.} $d_b=1$) with the constraints $\delta_{(b,1)}=1$ and $\varepsilon_{(b,1)}=0$; two variables in the block (\emph{i.e.} $d_b=2$) with the constraints $\delta_{(b,1)}=1$ and $\varepsilon_{(b,1)}=\varepsilon_{(b,2)}$; more than two variables in the block (\emph{i.e.} $d_b>2$) with the constraint $\delta_{(b,1)}=1$.
Proofs use the following probability: $\forall (j_1,j_2)\in\bOmega_b,$
\begin{equation}
p(x_{j_1}=1,x_{j_2}=1|\bth_b) =\left\{ \begin{array}{rl}
\alpha_{j_1} \alpha_{j_2} + \varepsilon_{j_1} \varepsilon_{j_2} \alpha_{j_1} (1 - \alpha_{j_2}) & \text{if } \delta_{j_2} = 1\\
\alpha_{j_1} \alpha_{j_2} - \varepsilon_{j_1} \varepsilon_{j_2} \alpha_{j_1} \alpha_{j_2} & \text{if } \delta_{j_2} = 0 \text{ and } \alpha_{j_1} + \alpha_{j_2}<1\\
\alpha_{j_1} \alpha_{j_2} - \varepsilon_{j_1} \varepsilon_{j_2} (1-\alpha_{j_1}) (1 - \alpha_{j_2}) & \text{if } \delta_{j_2} = 0 \text{ and } \alpha_{j_1} + \alpha_{j_2}\geq 1
\end{array}\right. \label{eq:couple}
\end{equation}
If $\delta_{(b,j)}\neq\delta_{(b,j)}'$ then without loss of generality we assume that $\delta_{(b,j)}=1$ and $\delta_{(b,j)}'=0$. From \eqref{eq:couple}, $p(x_{1}=1,x_{j}=1|\bth_b)>\alpha_{(b,1)}\alpha_{(b,j)}=\alpha_{(b,1)}'\alpha_{(b,j)}'>p(x_{1}=1,x_{j}=1|\bth_b')$ but this is in contradiction to \eqref{definitionidentifiabilite}, so $\forall j \in \bOmega_b$, $\delta_{(b,j)}'=\delta_{(b,j)}$. Therefore, we have to prove the equality $\varepsilon_{(b,j)}=\varepsilon_{(b,j)}'$.

\noindent \textbf{Case 1 ($d_b=1$ with $\delta_{(b,1)}=1$ and $\varepsilon_{(b,1)}=0$)}. Then parametrization assumes that only parameter $\alpha$ is free. Equality $\alpha_j=\alpha_j'$ implies $\bth_b=\bth_b'$.

\noindent \textbf{Case 2 ($d_b=2$ with $\delta_{(b,1)}=1$ and $\varepsilon_{(b,1)}=\varepsilon_{(b,2)}$)}. By using constraints $\varepsilon_{(b,1)}=\varepsilon_{(b,2)}$ and $\varepsilon_{(b,1)}'=\varepsilon_{(b,2)}'$ and by using \eqref{eq:couple}, then $\varepsilon_{(b,1)}^2=\varepsilon_{(b,1)}'^2$. Thus, $\bth_b=\bth_b'$.

\noindent \textbf{Case 3 ($d_b>2$ with $\delta_{(b,1)}=1$)}. \eqref{eq:couple} is verified by $\bth$ and $\bth'$. Moreover, we know that $\alpha_j=\alpha_j'$ and $\delta_j=\delta_j'$, for $j=1,\ldots,d$. So, the following system $S$ arises from \eqref{eq:couple} for $(j_1,j_2)=\{(1,2),(1,3),\ldots,(1,d_b),(2,3)\}$
\begin{equation}
(S)=\left\{ \begin{array}{rl}
\varepsilon_{(b,1)} \varepsilon_{(b,2)} &=\varepsilon'_{(b,1)} \varepsilon'_{(b,2)}\\
\varepsilon_{(b,1)} \varepsilon_{(b,3)} &=\varepsilon'_{(b,1)} \varepsilon'_{(b,3)}\\
 \vdots\quad \vdots\quad & = \quad \vdots\quad \vdots\\
\varepsilon_{(b,1)} \varepsilon_{(b,d_b)} &=\varepsilon'_{(b,1)} \varepsilon'_{(b,d_b)}\\
\varepsilon_{(b,2)} \varepsilon_{(b,3)} &=\varepsilon'_{(b,2)} \varepsilon'_{(b,3)}
\end{array}
\right.
\end{equation}
If $\varepsilon_{(b,1)}'\neq\varepsilon_{(b,1)}$ then $\exists t\neq 1$ such that $\varepsilon_{(b,1)}'=t\varepsilon_{(b,1)}$. Then, the first $d_b$ lines of $(S)$ imply that $\forall j=2,\ldots, d_b$, $\varepsilon_{(b,j)}=t\varepsilon_{(b,j)}'$. Since the last line of $(S)$ implies that $\varepsilon_{(b,2)}\varepsilon_{(b,3)}=\varepsilon_{(b,2)}\varepsilon_{(b,3)}/t^2$, positivity of $\varepsilon_{(b,j)}$ permits to conclude that $\varepsilon_{(b,1)}'=\varepsilon_{(b,1)}$, so $\forall j=2,\ldots, d_b$, $\varepsilon_{(b,j)}=\varepsilon_{(b,j)}'$. Thus, $\bth_b=\bth_b'$.
\end{proof}

\begin{proof}[Proof of Proposition~\ref{prop:dependency}]
We denote $p_{hh'}=P(X_{\sigma_b(j)}=h,X_{\sigma_b(j')}|\bomega,\bth)$ with $j<j'$ and $h\in\{0,1\}$ and $h'\in\{0,1\}$. Then
\begin{align*}
p_{11} &= \alpha_{(b,j)}\alpha_{(b,j')} + r \\
p_{01} &= (1-\alpha_{(b,j)})\alpha_{(b,j')} - r\\
p_{10} &= \alpha_{(b,j)}(1-\alpha_{(b,j')}) - r\\
p_{00} &= (1-\alpha_{(b,j)})(1-\alpha_{(b,j')}) + r\\
\end{align*}
where $r=\varepsilon_{(b,j)}\varepsilon_{(b,j')} \beta_{(b,j)}(1-\beta_{(b,j')})$. Thus, \eqref{eq:cramer} is obtained by applying the definition of Cramer's V.
\end{proof}

\section{Details about the M-step of the EM algorithm} \label{app:EM}
By using the definition of $\hat{\alpha}_j$, $\hat{\alpha}_j=n_{10} + n_{01}$ where $n_{10}=\frac{1}{n} \sum_{i=1}^n x_{ij} (1-t_{ij})^{\delta_j}(t_{ij})^{1-\delta_j}$ and $n_{11}=\frac{1}{n} \sum_{i=1}^n x_{ij} (t_{ij})^{\delta_j}(1-t_{ij})^{\delta_j}$. Moreover, the expectation of the complete-data likelihood related to variable $j$ is written as
\begin{align}
L(\hat{\alpha}_j,\delta_j,\varepsilon_j;\tx,\ttt,\bomega) &=
n_{10}\ln((1-\varepsilon_j) (n_{11} + n_{10})) +
n_{11}\ln((1-\varepsilon_j) (n_{11} + n_{10}) + \varepsilon_j) \\
 &+ n_{00}\ln(1-(1-\varepsilon_j) (n_{11} + n_{10})) + 
n_{01}\ln(1-(1-\varepsilon_j) (n_{11} + n_{10})- \varepsilon_j), \nonumber
\end{align}
where  
$n_{00}=\frac{1}{n} \sum_{i=1}^n (1-x_{ij}) (1-t_{ij})^{\delta_j}(t_{ij})^{1-\delta_j}$ and 
$n_{01}=\frac{1}{n} \sum_{i=1}^n (1-x_{ij}) (t_{ij})^{\delta_j}(1-t_{ij})^{\delta_j}$.
For a fixed value of $\delta_j$, the argmax over $\varepsilon_j$ of $L(\hat{\alpha}_j,\delta_j,\varepsilon_j;\tx,\ttt,\bomega)$ is denoted by $\varepsilon_{j|\delta_j}$. The estimation of $\varepsilon_{j|\delta_j}$ is obtained by setting to zero the derivative of $L(\hat{\alpha}_j,\delta_j,\varepsilon_j;\tx,\ttt,\bomega) $ over $\varepsilon_j$. So, remarking that $n_{01}=1-n_{11}-n_{10} - n_{00}$, 
\begin{equation}
\frac{n_{11} + n_{00} -1}{1-\varepsilon_{j|\delta_j}} + \frac{n_{11} (1-n_{11}-n_{10})}{n_{11} + n_{10} + \varepsilon_{j|\delta_j}(1-n_{11} - n_{10})} + 
\frac{n_{01} (n_{11}+n_{10})}{(n_{11} + n_{10}) \varepsilon_{j|\delta_j} + (1-n_{11} - n_{10})} = 0.
\end{equation}
This equation is equivalent to the following quadratic equation
\begin{equation}
\varepsilon^2 A + \varepsilon B + C=0, \label{eq:quadra}
\end{equation}
where $A=-(n_{11} + n_{10})(1-n_{11}-n_{10})$,
$B=n_{11}(n_{11} + n_{10}) + n_{00}(1-n_{11} - n_{10}) - (n_{11} + n_{10})^2 - (1 - n_{11} - n_{10})^2$ and where
$C=n_{11}(1-n_{11} - n_{10})  + n_{00}(n_{11} + n_{10}) + A$. Let $s_1$ and $s_2$ be the two solutions of \eqref{eq:quadra}:
\begin{equation}
s_1=\frac{-B-\sqrt{\Delta}}{2A} \text{ and }s_2=\frac{-B+\sqrt{\Delta}}{2A} ,
\end{equation}
where $\Delta=B^2-4AC$.  By noting that $\varepsilon_j\in]0,1[$, and that $s_1=\frac{(n_{11} + n_{10}) n_{10} + (1 - n_{11} - n_{10})n_{01} }{-2(n_{11} + n_{10})(1 - n_{11} - n_{10})}<0$, we conclude that $\varepsilon_{j|\delta_j}=\max(0,s_2)$.

\section*{Consistency of the HAC-based procedure} \label{app:consistency}
The proof of Proposition~\ref{prop:consistency} is done in three steps. First, we show that the HAC-based procedure applied to the theoretical Cramer's matrix is consistent. Second, we show that this result holds in a neighbourhood of the theoretical Cramer's matrix. Third, we conclude by using the convergence in probability of the empiric Cramer's matrix to the theoretical one.

Let $M^0\in[0,1]^{d\times d}$ be the dissimilarity matrix based on Cramer's V computed with the true distribution defined by model $\bomega^0$ and its parameters $\bth^0$. So, for $1\leq j,j' \leq d$
\begin{equation}
M^0(j,j')= 1 - V^0(X_{j},X_{j'})
\end{equation}
with $V^0(X_{j},X_{j'})$ is the theoretical Cramer's V between $X_{j}$ and $X_{j'}$ defined by
\begin{equation}
 V^0(X_{j},X_{j'})=\sqrt{\sum_{h=0}^1\sum_{h'=0}^1 \frac{\left(P(X_j=h,X_{j'}=h'; \bomega^0, \bth^0) - P(X_j=h; \bomega^0, \bth^0)P(X_{j'}=h'; \bomega^0, \bth^0)\right)^2}{P(X_j=h; \bomega^0, \bth^0)P(X_{j'}=h'; \bomega^0, \bth^0)}},
\end{equation}
Since the true model $\bomega^0$ involves independence between blocks of variables,  for $1\leq j,j' \leq d$ with $\omega_{j}^0\neq\omega_{j'}^0$, $M^0(j,j')=1$. We denote by $\mu^0$ the greatest value of $M^0$ when the variables belong to the same block for the true model $\bomega^0$
\begin{equation}
\mu^0=\arg\max_{\{(j,j'): \omega_{j}^0 = \omega_{j'}^0\}}M^0(j,j').
\end{equation}
Note that $\mu^0<1$ since the variables affiliated into the same block are dependent.
Finally, $\bOmega^{[r]}=(\bOmega_b^{[r]};b=1,\ldots,d)$ denotes the partition provided by the HAC at its iteration $[r]$, where $\Omega_b^{[r]}$ is the set of the indices of the variables affiliated to block $b$ at iteration $[r]$. We consider that the HAC is used with a classical dissimilarity measure $D(.,.)$ (min, max, mean or Ward).

\begin{proposition}\label{prop:recurrence}
 If $\exists (j_1,j_2)$ $\in$ $\{1,\ldots,d\}^2$ with $\omega_{j_1}^0=\omega_{j_2}^0$ and with $j_1\in\bOmega_{b_1}^{[r]}$, $j_2\in\bOmega_{b_2}^{[r]}$  and $b_1 \neq b_2$, then
\begin{equation}
\forall b, \forall (j,j') \in \bOmega_b^{[r+1]}: \omega_j^0=\omega_{j'}^0. \label{eq:recurence}
\end{equation}
\end{proposition}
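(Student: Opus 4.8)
The plan is to read Proposition~\ref{prop:recurrence} as the inductive step of an argument whose invariant is that the current HAC partition \emph{refines} the true partition, i.e.\ every current cluster $\bOmega_b^{[r]}$ is contained in a single true block. This invariant holds at the start of the HAC, since the partition into singletons trivially refines $\bomega^0$, and the proposition is exactly what propagates it one merge further. So, granting the refinement invariant at iteration $[r]$ together with the stated hypothesis that some pair $(j_1,j_2)$ with $\omega_{j_1}^0=\omega_{j_2}^0$ is still split across two clusters, I would show that the merge performed at step $[r]$ joins two clusters lying in the \emph{same} true block; since every other cluster is left unchanged, this immediately yields \eqref{eq:recurence}.

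First I would partition the candidate merges into two types. Because $\bOmega^{[r]}$ refines $\bomega^0$, every cluster lies inside one true block, so any two distinct clusters are either (i) contained in the same true block, or (ii) contained in two different true blocks. The decisive point is the contrast in $M^0$ values used by the HAC: every within-block entry satisfies $M^0(j,j')\leq\mu^0<1$, whereas every across-block entry satisfies $M^0(j,j')=1$, since $V^0(X_j,X_{j'})=0$ between independent blocks. Moreover, the hypothesis guarantees that at least one type-(i) pair of clusters is available to be merged: the clusters containing $j_1$ and $j_2$ both sit inside the common true block $\omega_{j_1}^0=\omega_{j_2}^0$.

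Next I would compare the linkage $D(\cdot,\cdot)$ of the two types and show the global minimiser is of type (i). For single, complete and average linkage this is immediate: a type-(i) linkage is a minimum, maximum or average of entries all $\leq\mu^0$, hence $\leq\mu^0<1$, while a type-(ii) linkage aggregates entries all equal to $1$, hence equals $1$. Thus every across-block linkage equals $1$ and strictly exceeds the available within-block linkage, so the HAC merges two clusters of the same true block and refinement is preserved at $[r+1]$. For the Ward linkage the same conclusion should hold, but it cannot be read off the raw entries and must be tracked through the Lance--Williams recursion; here the subtractive term of the Ward update in fact \emph{raises} the dissimilarity toward clusters of other blocks above the base level $1$, which reinforces the preference for within-block merges rather than threatening it.

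I expect the Ward case to be the main obstacle, precisely because the clean ``$<1$ versus $=1$'' dichotomy that disposes of the other three linkages is no longer available entrywise: the Ward dissimilarity between two clusters is a size-weighted combination of base entries carrying a negative term, so one must argue by induction that, as long as a within-block merge exists, its Ward cost stays strictly below that of every across-block merge, controlling the interplay between cluster sizes and the gap $1-\mu^0$. A secondary point to make explicit is the bookkeeping of the invariant itself: Proposition~\ref{prop:recurrence} is genuinely an inductive step, so I would state once that refinement holds at the singleton partition and is preserved by every merge covered above, which is what makes the conclusion \eqref{eq:recurence} at iteration $[r+1]$ meaningful.
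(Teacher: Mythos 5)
Your proof is essentially the paper's: the same refinement invariant maintained by induction over the merges, the same dichotomy between across-block linkages (equal to $1$, since Cramer's V vanishes between independent blocks) and within-block linkages (at most $\mu^0<1$), and the same conclusion that the linkage-minimizing merge must therefore be within a single true block. Your reservation about the Ward linkage is legitimate --- the entrywise ``$\leq\mu^0$ versus $=1$'' dichotomy does not transfer verbatim through the Lance--Williams update --- but the paper's own proof simply asserts the two bounds for all four linkages (min, max, mean, Ward) without addressing this, so on that point you are, if anything, more careful than the original.
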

\begin{proof}
At iteration $[0]$, each variable is affiliated into its own block, so $\bOmega_b^{[r]}=\{b\}$ for $b\in\{1,\ldots,d\}$. Let $(j_1^{[0]},j_2^{[0]})=\arg\min_{(j_1,j_2)} M^0(j_1,j_2)$, then
\begin{equation}
\bOmega_b^{[1]}=\left\{ \begin{array}{rl}
\bOmega_b^{[0]} & \text{if } b\neq j_1^{[0]} \text{ and } b\neq j_2^{[0]} \\
\bOmega_{j_1^{[0]}}^{[0]} \cup \bOmega_{j_2^{[0]}}^{[0]} & \text{if } b=j_1^{[0]}\\
\emptyset & \text{if } b=j_2^{[0]}
\end{array} \right. .
\end{equation}
The $\bOmega^{[1]}$ verifies \eqref{eq:recurence}.

At iteration $[r]$, by definition $\forall b$, $\forall (j,j')\in\bOmega_b^{[r]}$: $\omega_{j}^0=\omega_{j'}^0$. Let the couple  $(b_1^{[r]},b_2^{[r]})=\arg\min_{(b_1,b_2) \text{with} b_1\neq b_2} D(\bOmega_{b_1}^{[r]},\bOmega_{b_2}^{[r]})$. There are two cases to be considered, for all $j_1\in\bOmega_{b_1}^{[r]}$ and  $j_2\in\bOmega_{b_2}^{[r]}$,
\begin{itemize}
\item if $\omega_{j_1}^0\neq \omega_{j_2}^0$ then $D(\bOmega_{b_1}^{[r]},\bOmega_{b_2}^{[r]}) = 1$.
\item if $\omega_{j_1}^0= \omega_{j_2}^0$ then $D(\bOmega_{b_1}^{[r]},\bOmega_{b_2}^{[r]})\leq \mu^0$.
\end{itemize}
Since $\mu^0<1$, \eqref{eq:recurence} is verified.
\end{proof}

\begin{corollary}[Consistency with theoretical matrix]
The HAC based on the dissimilarity matrix $M$ provides the true model at its iteration $d-\B^0$ where $\B^0$ is the number of blocks defined by $\bomega^0$.
\end{corollary}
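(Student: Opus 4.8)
The plan is to iterate Proposition~\ref{prop:recurrence} and then count the number of merges. Throughout, call a partition \emph{pure} when each of its blocks is contained in a single true block, i.e. $\forall b,\ \forall (j,j')\in\bOmega_b:\ \omega_j^0=\omega_{j'}^0$; equivalently, a pure partition refines $\bomega^0$. First I would argue, by induction on the iteration $[r]$, that $\bOmega^{[r]}$ is pure for every $r\in\{0,\ldots,d-\B^0\}$. The base case $r=0$ holds because each variable is its own block. For the inductive step, note that any pure partition with strictly more than $\B^0$ blocks must split at least one true block (otherwise its blocks would be in bijection with the true blocks, giving exactly $\B^0$ of them); hence for $r<d-\B^0$ the hypothesis of Proposition~\ref{prop:recurrence} is met, and that proposition yields purity of $\bOmega^{[r+1]}$.

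Second, I would observe that each of these merges is \emph{within} a true block. Since the HAC merges exactly two blocks at a time, the number of blocks drops by one per iteration, so $\bOmega^{[r]}$ has $d-r$ blocks; because the result remains pure while one block disappears, the two merged blocks must have belonged to the same true block, a cross-block merge being impossible as it would create an impure block. This is exactly the content of the linkage dichotomy established inside the proof of Proposition~\ref{prop:recurrence}: for a pure partition one has $D(\bOmega_{b_1},\bOmega_{b_2})=1$ when the two blocks lie in different true blocks and $D(\bOmega_{b_1},\bOmega_{b_2})\le\mu^0<1$ when they lie in the same one, so the minimiser selected by the HAC is always a within-true-block pair as long as some true block is still split.

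Finally I would count. At iteration $d-\B^0$ the partition $\bOmega^{[d-\B^0]}$ is pure and has exactly $d-(d-\B^0)=\B^0$ blocks. A pure partition refines $\bomega^0$, and a refinement of $\bomega^0$ having the same number of blocks as $\bomega^0$ must coincide with it; therefore $\bOmega^{[d-\B^0]}=\bomega^0$, which is the claim. Equivalently, assembling a true block of size $s_b$ costs $s_b-1$ merges, and $\sum_{b=1}^{\B^0}(s_b-1)=d-\B^0$, so the true model is reached after precisely $d-\B^0$ within-block merges.

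The step I expect to be delicate is the inductive maintenance of purity across all $d-\B^0$ iterations, since it hinges on the linkage dichotomy holding uniformly for every admissible $D\in\{\min,\max,\text{mean},\text{Ward}\}$: one must check that each aggregation sends a family of pairwise dissimilarities all equal to $1$ to the value $1$, and a family all bounded by $\mu^0$ to a value bounded by $\mu^0$. For $\min$, $\max$ and mean this monotone aggregation is transparent, but for the size-dependent Ward linkage it requires a small separate verification; once that is secured, the induction and the counting are immediate.
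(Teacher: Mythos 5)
Your argument is correct and is essentially an expanded, carefully justified version of the paper's own one-sentence proof, which simply observes that the true partition is the only partition into $\B^0$ classes compatible with iterating Proposition~\ref{prop:recurrence}; your induction on purity plus the merge count $\sum_b(s_b-1)=d-\B^0$ is exactly what that sentence is gesturing at. The caveat you flag about the size-dependent Ward linkage is legitimate, but the paper glosses over it in the same way, asserting the dichotomy $D=1$ versus $D\leq\mu^0$ uniformly for min, max, mean and Ward without separate verification.
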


\begin{proof}
It is the only partition of $\B^{[0]}$ classes which respects Proposition~\ref{prop:recurrence}.
\end{proof}

\begin{corollary}[Consistency in a neighbourhood of the theoretical matrix]  \label{cor:neighbourhood}
The HAC based on dissimilarity matrix$M$ belonging to a neighbourhood of $M^0$, denoted by $N(M^0)$, provides the true model at its iteration $d-\B^0$ where
\begin{equation}
N(M^0)=\left\{ M\in[0,1]^{d\times d} \text{ with } |M(j,j') - M^0(j,j')|<\frac{1-\mu^0}{2}\right\}.
\end{equation}
\end{corollary}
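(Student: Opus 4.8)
The plan is to reduce the perturbed case back to the exact case already settled by Proposition~\ref{prop:recurrence} and the preceding corollary on consistency with the theoretical matrix $M^0$, by producing a single threshold that separates within-block from between-block dissimilarities uniformly over the whole neighbourhood $N(M^0)$. Set $\tau=\frac{1+\mu^0}{2}$, the midpoint between $\mu^0$ and $1$. The first step is to verify that $\tau$ performs this separation at the level of pairs of variables for every $M\in N(M^0)$: if $\omega_j^0=\omega_{j'}^0$ then $M^0(j,j')\leq\mu^0$, so $M(j,j')<\mu^0+\frac{1-\mu^0}{2}=\tau$, whereas if $\omega_j^0\neq\omega_{j'}^0$ then $M^0(j,j')=1$, so $M(j,j')>1-\frac{1-\mu^0}{2}=\tau$. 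Hence within-block pairwise dissimilarities lie strictly below $\tau$ and between-block pairwise dissimilarities strictly above $\tau$, the exact gap that the constant $\frac{1-\mu^0}{2}$ in the definition of $N(M^0)$ is designed to guarantee.

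The second step is to lift this separation from pairs of variables to pairs of clusters, which is where the choice of linkage enters. For the min, max and mean linkages one has the sandwiching bound $\min_{a\in A,b\in B}M(a,b)\leq D(A,B)\leq\max_{a\in A,b\in B}M(a,b)$, so whenever two clusters $A$ and $B$ are each contained in a single block of $\bomega^0$ the cluster dissimilarity inherits the separation: $D(A,B)<\tau$ when $A\cup B$ lies inside one true block, and $D(A,B)>\tau$ when $A$ and $B$ sit in two distinct true blocks. I would then check the analogous statement for the Ward linkage through its Lance--Williams recursion; this is the delicate point, since the Ward update carries a negative coefficient and the plain $\min$--$\max$ bound need not survive it, so one must argue directly that merges internal to a true block keep the cluster dissimilarity below $\tau$ while any cross-block cluster dissimilarity stays above $\tau$.

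With the cluster-level separation in hand, the final step is to replay the induction of Proposition~\ref{prop:recurrence} with the exact values $\mu^0$ and $1$ replaced respectively by the strict upper bound $\tau$ and the strict lower bound $\tau$. Starting from singletons, which are trivially \emph{pure} (each contained in one true block), I would show that as long as some true block is still split across at least two current HAC clusters, the minimum of $D$ over all cluster pairs is attained by two same-true-block clusters (value $<\tau$) rather than by any cross-block pair (value $>\tau$); consequently the merge never joins variables of distinct true blocks and purity is preserved. Thus the first $d-\B^0$ merges are all internal to true blocks, and the partition reached at iteration $d-\B^0$ is the only pure partition into $\B^0$ classes, namely $\bomega^0$ itself. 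The main obstacle I anticipate is precisely the Ward case of the second step; once threshold preservation is established for all four linkages, the conclusion follows mechanically from the earlier recurrence.
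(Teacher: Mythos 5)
Your proposal is correct and follows essentially the same route as the paper: establish that every $M\in N(M^0)$ separates within-block from between-block pairs at the threshold $\frac{1+\mu^0}{2}$, and then rerun the induction of Proposition~\ref{prop:recurrence} with that threshold in place of the exact values $\mu^0$ and $1$. The one point where you go beyond the paper --- worrying about whether the pairwise separation lifts to the cluster-level linkage $D$, in particular for Ward, whose Lance--Williams update is not sandwiched between the pairwise min and max --- is a genuine subtlety, but the paper's own proof silently assumes this lifting for all four linkages, so your proposal is, if anything, the more careful of the two.
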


\begin{proof}
Proof is based on the same reasoning as the proof of Proposition~\ref{prop:recurrence}, since we have
\begin{align*}
\left\{ \begin{array}{rl}
M(j,j')>\frac{1+\mu^0}{2} & \text{ if } \omega_j^0 \neq \omega_{j'} \\
M(j,j')<\frac{1+\mu^0}{2} & \text{ if } \omega_j^0 = \omega_{j'}
\end{array} \right. .
\end{align*}
\end{proof}

\begin{proof}[Proof of Proposition~\ref{prop:consistency}]
The Law of Large numbers implies that the observed probability of each couple $(j,j')$ converges in probability to its true value: for any $h\in\{0,1\}$ and $h'\in\{0,1\}$
\begin{equation}
\hat{p}_{hh'} \stackrel{pr}{\rightarrow} P(X_j=1,X_{j'}=1; \bomega^0, \bth^0),
\end{equation}
where $p_{hh'}=\frac{1}{n} \sum_{i=1}^n \mathds{1}_{x_{ij}=h}\mathds{1}_{x_{ij'}=h'}$.

The empirical Cramer's V  denoted by $\hat{V}$ is a continuous function of $\hat{p}_{hh'}$, since it is defined by
\begin{equation}
\hat{V}(X_{j},X_{j'})=\sqrt{\sum_{h=0}^1\sum_{h'=0}^1 \frac{\left(\hat{p}_{hh'}- \hat{p}_{h\bullet}\hat{p}_{\bullet h'}\right)^2}{\hat{p}_{h\bullet}\hat{p}_{\bullet h'}}},
\end{equation}
where $\hat{p}_{h\bullet}=\hat{p}_{h0}+\hat{p}_{h1} $ and $\hat{p}_{\bullet h'}=\hat{p}_{0 h'}+\hat{p}_{1 h'}$.
 Thus, the Mapping theorem (see for instance Theorem 2.7 page 21 of \citet{billingsley2013convergence}) implies that $\hat{V}$ converges in probability to $V^0$. So,
 \begin{equation}
 P(M \in N(M^0)) \stackrel{n\to\infty}{\rightarrow} 1.
\end{equation}  
Thus, by applying Corollary~\ref{cor:neighbourhood}, the probability that $\bomega^0$ belongs to the model subset provided by the HAC procedure is equal to one. The consistency of the BIC criterion concludes the proof.
\end{proof}

\bibliographystyle{apalike}
\bibliography{biblio}
\end{document}